\definecolor{Green}{rgb}{0.13, 0.65, 0.3}
\definecolor{Amber}{rgb}{0.3, 0.5, 1.0}
\newcommand{\Alg}{\textbf{Alg}}
\newcommand{\supp}{\text{supp}}
\newcommand{\norm}[1]{\left\|{#1}\right\|}
\newcommand{\bbR}{\mathbb{R}}
\newcommand{\bi}{\begin{itemize}}
\newcommand{\ei}{\end{itemize}}
\DeclareMathOperator*{\argmin}{\arg\!\min}
\newcommand{\calA}{{\mathcal{A}}}
\newcommand{\calX}{{\mathcal{X}}}
\newcommand{\calS}{{\mathcal{S}}}
\newcommand{\calI}{{\mathcal{I}}}
\newcommand{\calD}{{\mathcal{D}}}
\newcommand{\calE}{{\mathcal{E}}}
\newcommand{\calR}{{\mathcal{R}}}
\newcommand{\calP}{{\mathcal{P}}}
\newcommand{\calZ}{{\mathcal{Z}}}
\newcommand{\calF}{{\mathcal{F}}}
\newcommand{\field}[1]{\mathbb{#1}}
\newcommand{\E}{\field{E}}
\renewcommand{\P}{\field{P}}
\newcommand{\KL}{{\text{\rm KL}}}
\newcommand{\order}{\ensuremath{\mathcal{O}}}
\newcommand{\otil}{\ensuremath{\widetilde{\mathcal{O}}}}
\newcommand{\rbr}[1]{\left(#1\right)}
\newcommand{\sbr}[1]{\left[#1\right]}
\newcommand{\cbr}[1]{\left\{#1\right\}}
\newcommand{\abr}[1]{\left|#1\right|}
\DeclareFontFamily{OMX}{MnSymbolE}{}
\DeclareFontShape{OMX}{MnSymbolE}{m}{n}{
    <-6>  MnSymbolE5
   <6-7>  MnSymbolE6
   <7-8>  MnSymbolE7
   <8-9>  MnSymbolE8
   <9-10> MnSymbolE9
  <10-12> MnSymbolE10
  <12->   MnSymbolE12}{}
\DeclareSymbolFont{mnlargesymbols}{OMX}{MnSymbolE}{m}{n}
\DeclareMathDelimiter{\llangle}{\mathopen}{mnlargesymbols}{'164}{mnlargesymbols}{'164}
\DeclareMathDelimiter{\rrangle}{\mathclose}{mnlargesymbols}{'171}{mnlargesymbols}{'171}
\DeclarePairedDelimiter\ceil{\lceil}{\rceil}
\DeclarePairedDelimiter\floor{\lfloor}{\rfloor}
\newcommand{\pref}[1]{\prettyref{#1}}
\newcommand{\savehyperref}[2]{\texorpdfstring{\hyperref[#1]{#2}}{#2}}
\theoremstyle{plain}
\newtheorem{theorem}{Theorem}[section]
\newtheorem{example}[theorem]{Example}
\newtheorem{lemma}[theorem]{Lemma}
\newtheorem{corollary}[theorem]{Corollary}
\theoremstyle{definition}
\newtheorem{assumption}[theorem]{Assumption}
\theoremstyle{remark}
\icmltitlerunning{Learning to Incentivize in Repeated Principal-Agent Problems with Adversarial Agent Arrivals}
\begin{document}

\twocolumn[
\icmltitle{Learning to Incentivize in Repeated Principal-Agent Problems \\ with Adversarial Agent Arrivals}

% It is OKAY to include author information, even for blind
% submissions: the style file will automatically remove it for you
% unless you've provided the [accepted] option to the icml2025
% package.

% List of affiliations: The first argument should be a (short)
% identifier you will use later to specify author affiliations
% Academic affiliations should list Department, University, City, Region, Country
% Industry affiliations should list Company, City, Region, Country

% You can specify symbols, otherwise they are numbered in order.
% Ideally, you should not use this facility. Affiliations will be numbered
% in order of appearance and this is the preferred way.
\icmlsetsymbol{equal}{*}

\begin{icmlauthorlist}
\icmlauthor{Junyan Liu}{equal,sch}
\icmlauthor{Arnab Maiti}{equal,sch}
\icmlauthor{Artin Tajdini}{equal,sch}
\icmlauthor{Kevin Jamieson}{sch}
\icmlauthor{Lillian J. Ratliff}{dep}
\end{icmlauthorlist}

\icmlaffiliation{sch}{Paul G. Allen School of Computer Science \& Engineering, University of Washington, Seattle, USA}
\icmlaffiliation{dep}{Department of Electrical and Computer Engineering, University of Washington, Seattle, USA}

\icmlcorrespondingauthor{Junyan Liu}{junyanl1@cs.washington.edu}
\icmlcorrespondingauthor{Arnab Maiti}{arnabm2@cs.washington.edu}
\icmlcorrespondingauthor{Artin Tajdini}{artin@cs.washington.edu}

% You may provide any keywords that you
% find helpful for describing your paper; these are used to populate
% the "keywords" metadata in the PDF but will not be shown in the document
\icmlkeywords{Machine Learning, ICML}

\vskip 0.3in
]

% this must go after the closing bracket ] following \twocolumn[ ...

% This command actually creates the footnote in the first column
% listing the affiliations and the copyright notice.
% The command takes one argument, which is text to display at the start of the footnote.
% The \icmlEqualContribution command is standard text for equal contribution.
% Remove it (just {}) if you do not need this facility.

%\printAffiliationsAndNotice{}  % leave blank if no need to mention equal contribution
 \printAffiliationsAndNotice{\icmlEqualContribution} % otherwise use the standard text.

\begin{abstract}
We initiate the study of a repeated principal-agent problem over a finite horizon $T$, where a principal sequentially interacts with $K\geq 2$ types of agents arriving in an \textit{adversarial} order. At each round, the principal strategically chooses one of the $N$ arms to incentivize for an arriving agent of \textit{unknown type}.  The agent then chooses an arm based on its own utility and the provided incentive, and the principal receives a corresponding reward. The objective is to minimize regret against the best incentive in hindsight. Without prior knowledge of agent behavior, we show that the problem becomes intractable, leading to linear regret. We analyze two key settings where sublinear regret is achievable. In the first setting, the principal knows the arm each agent type would select greedily for any given incentive. Under this setting, we propose an algorithm that achieves a regret bound of $\order(\min\{\sqrt{KT\log N},K\sqrt{T}\})$ and provide a matching lower bound up to a $\log K$ factor. In the second setting, an agent's response varies smoothly with the incentive and is governed by a Lipschitz constant $L\geq 1$. Under this setting, we show that there is an algorithm with a regret bound of $\otil((LN)^{1/3}T^{2/3})$ and establish a matching lower bound up to logarithmic factors. Finally, we extend our algorithmic results for both settings by allowing the principal to incentivize multiple arms simultaneously in each round.

% \arn{We have to careful with using terms like "adaptive". In pseudo-regret, oblivious adversaries are typically used and analyzed against. Same is the case for EXP3 for linear bandits. Rather say that agents arrive in an adversarial order. Also say at the start that there are $N$ arms as well that we incentive otherwise it is not clear at the start that there are multiple arms.} 

\end{abstract}

%!TEX root=main.tex

\section{Introduction}

The repeated principal-agent model captures sequential decision-making scenarios where a principal strategically incentivizes agents over multiple rounds to act toward a long-term objective.
This framework is central to many real-world applications, such as online platforms offering discounts to influence purchasing behavior, insurance companies designing contracts without full knowledge of customers’ risk levels, and crowdsourcing platforms structuring payments to ensure high-quality contributions.
In these settings, the principal cannot directly control agents' actions but influence them by proposing incentives, often under asymmetric information.
% \RTN{Before going through settings, it would be better to mention a couple of applications to motivate and help readers understand why the settings are relevant. Also, when we criticize stochastic and/or greedy assumption, we can go back to these examples}
Typically, asymmetric information introduces two key challenges, including \textit{moral hazard} \citep{bolton2004contract,ho2014adaptive,kaynar2023estimating} and \textit{adverse selection} \citep{scheid2024incentivized,dogan2023repeated,dogan2023estimating}. In moral hazard settings, the agents' actions cannot be directly observed by the principal, requiring incentive design to encourage agents to take actions in favor of the principal's interest. A common example is crowdsourcing where a task requester (principal) cannot directly observe the effort level of workers (agents). Adverse selection, on the other hand, arises when the agents' types or preferences are unknown. For example, in insurance, the insurer (principal) may not know whether a customer (agent) is high-risk, complicating the contract design.

While repeated principal-agent problems have been extensively studied, prior works heavily focus on the scenarios where the principal repeatedly contracts with a fixed and unknown type \citep{scheid2024incentivized,dogan2023estimating,dogan2023repeated} or a random type of agent drawn stochastically from \textit{a fixed underlying distribution} \citep{ho2014adaptive,gayle2015identifying}. However, these two assumptions may not hold in many real-world applications. For example, in online shopping, discount ads displayed on a website are visible to all users, and the sequence of users who respond may not follow a stochastic pattern. Another example is crowdsourcing, where the sequence of crowdworkers who respond to a task could be arbitrary and subject to herding behavior through online forums \citep{horton2011online}. 
To capture the uncertainty of agent arrivals, we initiate the study of the repeated principal-agent problem with adversarial agent arrivals. To our knowledge, this is the first study that examines this challenge in a repeated principal-agent framework.

Apart from agent arrival patterns, the agent response also plays a critical role in incentive design. Consider online shopping platforms where customers only make a purchase when a discount reaches a historical low or falls below their personal threshold. Similarly, crowdworkers in online labor markets often accept tasks only if the payout exceeds a minimum expectation. Such behaviors suggest a decision-making process governed by strict thresholds.
To model this, we start by considering the \textit{greedy response model} widely studied in the literature \citep{ho2014adaptive,zhu2022sample,dogan2023repeated,scheid2024incentivized,ben2024principal}, where an agent chooses an action that maximizes their utility, i.e., preference plus incentive. In this model, the agent plays an arm only when the incentive on this arm crosses a predefined level.

% \kevin{This next paragraph seems out of place. I thought here we were going to say \textit{Ideally we would like to be adaptive to adversarial agent arrival and no a prior knowledge of agent types (i.e., how the incentive affects the threshold value driving the decision). Unfortunately, we will show that this is intractable: no algorithm can compete with the single-best incentive in hindsight. 
% However, we observe that our lower bound takes advantage of the fact that a tiny change in incentive can result in a large change in agent decision, and thus, principal reward. 
% Such non-smooth sensitivity to incentives does not follow intuition and motivates us to look at a model where agents decisions vary smoothly with incentives.}}
% \ljr{I like what kevin wrote above. more to the motivation and we can get to the specifics later...}

% Ideally we would like to be adaptive to adversarial agent arrival and no a prior knowledge of agent types (i.e., how the incentive affects the threshold value driving the decision). Unfortunately, we will show that this is intractable: no algorithm can compete with the single-best incentive in hindsight. 
% However, we observe that our lower bound takes advantage of the fact that a tiny change in incentive can result in a large change in agent decision, and thus, principal reward. 
% Such non-smooth sensitivity to incentives does not follow intuition and motivates us to look at a model where agents decisions vary smoothly with incentives.

Ideally, we seek an approach that adapts to adversarial agent arrivals without prior knowledge of agent types, that is, how incentives influence the threshold values driving their decisions. Unfortunately, we will show that this is intractable, in the sense that no algorithm can compete with the single-best incentive in hindsight without that prior knowledge. Indeed, our lower bound exploits the fact that even a tiny change in incentives can lead to a drastic shift in agent decisions, and consequently, the principal’s reward. This non-smooth sensitivity to incentives defies intuition and suggests the need for an alternative model where agent decisions respond more gradually to incentives. 
For instance, when purchasing daily necessities, a small change in discounts should not drastically alter a customer’s decision but instead gradually influence their likelihood of buying. This distinction motivates the need for a more flexible response model.

Moreover, the smooth decision model can be viewed as an extension of the greedy model by allowing agents to incorporate smooth noise into their preferences. For example, if an agent adds i.i.d. Gumbel noise $\eta^t \sim \text{Gumble}(\mathbf{0}, 1)^N$ to their preference vector $\mu^{j_t}$, the resulting choice model becomes the Logit discrete choice model which is prevalent in economics literature \citep{Train_2009}; and if an agent adds i.i.d. Gaussian noise $\eta^t \sim \mathcal{N}(\mathbf{0}, I)$ to their preference vector $\mu^{j_t}$, the resulting choice model becomes Lipschitz (see \pref{app:greedy-noise-smooth}).

% While this threshold-based greedy response effectively models many real-world behaviors, it fails to capture the scenarios where the agent decisions \textit{vary smoothly} with incentives. For instance, when purchasing daily necessities, a small change in discounts should not drastically alter a customer’s decision but instead gradually influence their likelihood of buying. This distinction motivates the need for a more flexible response model. To this end, we propose \textit{smooth response model} where small changes in incentives lead to controlled variations in agents’ responses. Specifically, the distribution of agents' responses over arms varies smoothly with respect to the incentive, with a Lipschitz constant $L$. Crucially, this assumption is model-free, as it imposes no specific structure on agents’ response functions beyond satisfying smoothness. This flexibility allows our framework to accommodate a wide range of agent behaviors without restrictive modeling assumptions.

%!TEX root=main.tex

\subsection{Problem Statement}
\label{sec:pre}

We study the repeated principal-agent problem, where a principal incentivizes $K$ types of agents to choose arms from the set $[N] = \{1, \dots, N\}$. The principal has a \textit{known} reward vector $v = (v_1, \dots, v_N) \in [0,1]^N$. Each agent type $j \in [K]$ has an associated preference vector $\mu^j = (\mu^j_1, \dots, \mu^j_N) \in [0,1]^N$. 

The interaction unfolds over a finite horizon $t = 1, \dots, T$, where agents arrive in an adversarial order. In each round $t$, an agent of type $j_t \in [K]$ arrives (where $j_t$ is unobserved by principal), and the principal simultaneously chooses an incentive vector $\pi_t = (\pi_{t,1}, \dots, \pi_{t,N}) \in \calD$, where $\mathcal{D} \subseteq [0,1]^N$ denotes the principal’s decision space. The agent then selects an arm $a(\pi_t, j_t)$, taking $\pi_t$ and $j_t$ into account. The principal observes only the chosen arm $a(\pi_t, j_t)$ and receives the following utility: 
$U(\pi_t, j_t) = v_{a(\pi_t, j_t)} - \pi_{t, a(\pi_t, j_t)}.$ 

The objective of principal is to minimize the regret: 
\[
R_T = \sup_{\pi \in \mathcal{D}} \mathbb{E} \left[ \sum_{t=1}^T \left( U(\pi, j_t) - U(\pi_t, j_t) \right) \right].
\]
% However, the principal does not observe the agent’s type $j_t$. 

\textbf{Incentive structures.}  
We consider two types of incentives: single-arm incentives and general incentives.

\textit{Single-arm incentive}: the principal can incentivize at most one arm, meaning the incentive vector has at most one nonzero coordinate. The decision space is  
  \[
  \mathcal{D} = \{ x \in [0,1]^N : |\text{supp}(x)| \leq 1 \}.
  \]

\textit{General incentive}: the principal can incentivize multiple arms, meaning the principal can choose any incentive vector in $[0,1]^N$. The decision space is  $\mathcal{D} = [0,1]^N$.

\textbf{Arm Selection Models}  
We analyze two models for agent decision-making, namely, the greedy choice model and the smooth choice model.  

\textit{Greedy choice model}: an agent of type $j_t$ observes the incentive vector $\pi_t$ and best responds by deterministically choosing the arm
\[
b(\pi_t, j_t) \in \arg\max_{i \in [N]} \left\{ \mu^{j_t}_i + \pi_{t,i} \right\}.
\]
Ties are broken arbitrarily but consistently across agent types. We assume that the principal knows the best response function $b(\cdot, \cdot)$, meaning the principal can determine $b(\pi, j)$ for any $\pi \in \mathcal{D}$ and $j \in [K]$.

% \kevin{This makes it sound like the principle knows the agent's utility. Even if we knew the index of the agent's type we might now know what their utilty vector is in some settings}

%\arn{Don't read this yet}
\textit{Smooth choice model}: the agent of type $j_t$ selects an arm $a(\pi_t, j_t)$ probabilistically based on an \textit{unknown} distribution that varies smoothly with the incentive vector as follows.  
\begin{assumption}
  \label{assum:lipschitz}  
  \textbf{(Smooth decision model)}  
  For any two incentive vectors $\pi, \pi' \in \calD$ and any agent type $j$, the probability distribution of arm selection satisfies:  
  \[
  \sum_{i=1}^N \big| \Pr[a(\pi, j) = i] - \Pr[a(\pi', j) = i] \big| \leq L\cdot ||\pi-\pi'||_{\infty},
  \]
  where $L\geq 1$ is a Lipschitz constant.
  \end{assumption}

\subsection{Contributions}

 \begin{table*}[t]
    \centering
        \caption{Regret bounds under different agent behaviors and incentive models.}
        \vspace{-.2cm}
    \renewcommand{\arraystretch}{1.26}
    \begin{tabular}{cccc}
        \hline
        \textbf{Agent behavior} & \textbf{Incentive type} & \textbf{Upper bound} & \textbf{Lower bound} \\
        \hline
       Unknown, Greedy choice & Single-arm &  N/A & $\Omega(T)$  \\
        \hline
        \multirow{2}{*}{Known, Greedy choice} & Single-arm & $\tilde\order\left(\min\left\{\sqrt{K T \log(N)}, K \sqrt{T}\right\}\right)$ & $\tilde\Omega(\min\{\sqrt{KT\log(N)}, K\sqrt{T}\})$ \\
        & General & $\mathcal{O}(K\sqrt{T\log(KT)})$  & -- \\
        \hline
        \multirow{2}{*}{Unknown, Smooth} & Single-Arm & $\order \rbr{L^{1/3}N^{1/3}T^{2/3}}$ & $ \Omega \rbr{L^{1/3}N^{1/3}T^{2/3}}$\\
        & General & $\order\rbr{L^{N/(N+2)}T^{(N+1)/(N+2)}}$  & --\\
        \hline
    \end{tabular}
    \label{tab:regret_bounds}
\end{table*}

% \begin{table*}[t]
%     \centering
%         \caption{Regret bounds for different agent behaviors and incentive models.}
%     \renewcommand{\arraystretch}{1.2}
%     \begin{tabular}{|c|c|c|}
%         \hline
%         \textbf{Agent Behavior} & \textbf{Incentive Type} & \textbf{Regret Bound} \\
%         \hline
%         Unknown, Greedy Choice& Single-Arm & $\Omega(T)$  \\
%         \hline
%         \multirow{2}{*}{Known, Greedy Choice} & Single-Arm & $\Theta(\sqrt{KT \log(NT)})$ \\
%         & General & $\mathcal{O}(K\sqrt{T\log(KT)})$, $\Omega(K\sqrt{T/\log K})$  \\
%         \hline
%         \multirow{2}{*}{Unknown, Smooth} & Single-Arm & $\Theta(L^{1/3}N^{1/3}T^{2/3})$ \\
%         & General & $\mathcal{O}(L^{N/(N+2)}T^{(N+1)/(N+2)})$ \\
%         \hline
%     \end{tabular}
%     \label{tab:regret_bounds}
% \end{table*}

All our results are summarized in \pref{tab:regret_bounds}.
For the greedy response agent and single-arm incentive case, we show that any algorithm without prior knowledge of the agents' behaviors can be forced to incur linear regret. However, when the agent behaviors are known as a priori, that is we know the best response function of each agent but not which agent arrives at each round, we discretize the continuous incentive space and propose a reduction-based approach that turns the problem into an adversarial linear bandits problem with finite arms. The proposed algorithm achieves a $\tilde\order (\min\{\sqrt{KT\log(KN)}, K\sqrt{T}\})$ upper bound. Under this same setting of known types, we provide a $\tilde\Omega\left(\min \left\{K\sqrt{T},\sqrt{KT \log(N)}\right\}\right)$  regret lower bound, implying that our upper bound is tight up to logarithmic factors.

% \kevin{Is this still kown types? We need to be very explicit about the settings we're in, it is easy to get confused}
For the case of greedy agent and known agents' behaviors, if the principal is allowed to incentive more than one arm simultaneously (also known as general incentives), we adopt the previous reduction to adversarial linear bandits, but more importantly, introduce a novel discretization, which enables our algorithm to achieve a $\otil ( K\sqrt{T} )$\footnote{We use $\tilde\Omega(\cdot)$,$\otil(\cdot)$ to suppress poly-logarithmic terms in $K,T$.}regret bound. The key idea behind the discretization is identifying a polytope for the large incentive space such that the extreme point of the polytope is close to the optimal incentive.

Finally, for the smooth response agent and single-arm incentive case, we propose an algorithm which achieves a $\otil (L^{1/3}N^{1/3}T^{2/3})$ regret bound where $L\geq 1$ is a known Lipschitz constant. We further show the tightness of this upper bound as it matches our $\Omega(L^{1/3}N^{1/3}T^{2/3})$ lower bound up to logarithmic factors. We also show that for the general incentive case, $\otil(L^{N/(N + 2)} T^{(N + 1)/(N + 2)})$ regret is achievable.

\subsection{Related Work}
The principal-agent model has been extensively studied in both one-shot and repeated settings. In the one-shot setting \citep{holmstrom1979moral,grossman1992analysis,carroll2015robustness,dutting2019simple,dutting2022combinatorial}, the principal only interacts with agents once, and the typical objective is to identify the best (approximately) incentive. In the \textit{repeated} setting \citep{misra2005salesforce,misra2011structural,ho2014adaptive,kaynar2023estimating,scheid2024incentivized,dogan2023estimating,ben2024principal,wu2024contractual,liu2024principal}, the principal interacts with agents for multiple rounds, and the goal could be estimating agents' model or maximizing the cumulative profits, which in turn minimize the regret. Our work falls within this line of research. While principal-multi-agent problems have been considered in the one-shot setup \citep{dutting2023multi,duetting2025multi}, little is known about the multi-agent study in the repeated setting. The known related works \citep{ho2014adaptive,gayle2015identifying} consider the repeated principal-multi-agent problems, but they assume that the arrival of agents follow a fixed underlying distribution, which is often not the case in many real-world scenarios.

The repeated principal-agent problem has been studied under different forms of information asymmetry. The moral hazard setting \citep{misra2005salesforce,ho2014adaptive,kaynar2023estimating} assumes that the agents' actions are invisible to the principal, while the adverse selection \citep{ho2014adaptive,gayle2015identifying} considers cases where agents' preferences are unknown to the principal. Additionally, \citet{scheid2024incentivized,scheid2024learning,dogan2023estimating,liu2024principal} explore settings where the agent's type is unknown, but they assume a single agent type in the interaction, avoiding the challenge of multiple agent types responding differently to the same incentive—a key issue in our work. \citet{fiez2018combinatorial} study agents with multiple different types that evolve dynamically (as a function of the offered incentives) according to a controlled Markov chain, and focus their attention on epoch-based algorithms that exploit mixing time properties of the controlled Markov chain along with a greedy matching strategy in each epoch.

While our framework shares the principal–agent structure of \citet{harris2024regret} and \citet{balcan2025nearly}, it diverges from their contextual Stackelberg models in two key respects. First, in their setting, the principal chooses a mixed strategy from a probability simplex over a finite action set, whereas in our setting, the principal chooses an incentive vector from the hypercube $\pi\in[0,1]^N$. In addition, their payoffs of the principal and agent derive from a contextual utility function evaluated under the chosen mixed strategy and observed context. In contrast, in our model each agent of unknown type \(j\) selects arm \(i\in[N]\) to maximize \(\mu_i^j + \pi_i\), and the principal’s payoff is \(v_i - \pi_i\). These key differences in action space and reward structure highlight the distinction between our principal-agent model and theirs. 

Moreover, similar to the works of \citet{bernasconi2023optimal} and \citet{balcan2025nearly}, our algorithm proceeds by cleverly reducing the problem to an online linear optimization problem.

%!TEX root=main.tex

\section{Lower Bounds for Greedy Choice Model}

In this section, we establish the fundamental limits of the principal-agent problem under the single-arm incentive. We first show in \pref{sec:thm-linear-regret} that any algorithm lacking advance knowledge of the agents' behavior can be forced to incur linear regret. Then, in \pref{sec:lower-single-arm}, when the agents' behaviors are known, we derive a tight lower bound on regret for the greedy choice model, up to logarithmic factors.

\subsection{Linear Regret for Unknown Agent Behavior}\label{sec:thm-linear-regret}
Consider the greedy choice model, but now assume that the principal does not have access to the best response function $b(\cdot, \cdot)$. Under this assumption, we establish a linear regret bound for any algorithm, as stated below.  

\begin{theorem} \label{thm:linear_regret_without_preference}  
Suppose the principal and the agents operate under a greedy choice model with single-arm incentives, and the principal does not have access to the best response function $b(\cdot, \cdot)$. For any algorithm, there exists an instance of the principal-agent problem with $K = 2$ agent types and $N = 3$ arms such that $R_T = \Omega(T)$.  
\end{theorem}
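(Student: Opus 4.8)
The plan is to exhibit a pair of problem instances that are indistinguishable to any algorithm up through a critical round, yet require incompatible behavior to achieve sublinear regret, so that the algorithm must incur $\Omega(T)$ regret on at least one of them. The construction uses $N = 3$ arms with a known reward vector, say $v = (0, \tfrac12, 1)$ (values chosen so that arm $3$ is the principal's favorite but also the most expensive to incentivize), and $K = 2$ agent types whose preference vectors are \emph{unknown} to the principal. The key tension I would engineer is this: there is a single "good" incentive $\pi^\star$ (zero incentive, or a tiny incentive on one arm) that, for the adversarially chosen arrival sequence, causes every arriving agent to select arm $3$ and hence yields near-optimal utility roughly $1$ per round; but the principal cannot identify $\pi^\star$ without knowing $b(\cdot,\cdot)$, because any incentive it tries that is not exactly $\pi^\star$ reveals essentially nothing and earns utility bounded away from the optimum.

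First I would fix two candidate environments. In environment $\mathcal{E}_1$, type-$1$ agents have preferences making them pick arm $3$ under zero incentive, while type-$2$ agents pick arm $1$ under zero incentive but can be "flipped" to arm $3$ only by an incentive of a specific magnitude $c \in (0,1)$ on arm $3$. In environment $\mathcal{E}_2$, the roles are swapped, or more precisely the threshold $c$ is replaced by a different value $c'$, chosen so that the optimal-in-hindsight incentive differs between the two environments. Since the adversary controls the arrival order $j_t$ and the principal observes only the \emph{selected arm} $a(\pi_t, j_t)$ (not the type $j_t$, and not the best-response function), I would have the adversary send a long initial block of type-$1$ agents. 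During this block, both environments produce identical observations for any fixed algorithm behavior, because type-$1$ agents behave the same way in $\mathcal{E}_1$ and $\mathcal{E}_2$ under the incentives the algorithm actually plays — here I would design the type-$1$ preference so its choice is insensitive to incentives of magnitude up to, say, $1$, always returning arm $3$. Then the algorithm's play during this block carries no information distinguishing $\mathcal{E}_1$ from $\mathcal{E}_2$.

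Next, after the uninformative block, the adversary switches to type-$2$ agents for the remaining $\Theta(T)$ rounds. The optimal hindsight incentive must be tuned to the type-$2$ threshold ($c$ versus $c'$), and these two optimal incentives are incompatible: playing the one good for $\mathcal{E}_1$ in environment $\mathcal{E}_2$ causes type-$2$ agents to pick the low-value arm (utility $\approx 0$ or $\tfrac12$ instead of $1 - c$), and symmetrically. Because the algorithm's state at the switch point is (distributionally) the same across the two environments, a standard two-point / Le Cam argument shows that for at least one of $\mathcal{E}_1, \mathcal{E}_2$ the algorithm plays a near-optimal type-$2$ incentive on only a vanishing fraction of the final block, incurring per-round regret $\Omega(1)$ there, hence $R_T = \Omega(T)$. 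I would make this rigorous by: (i) checking that the single-best incentive in hindsight for each $\mathcal{E}_i$ achieves total utility $T - o(T)$ (so the benchmark is genuinely high); (ii) arguing the feedback during the type-$1$ block is independent of the environment (a clean deterministic or coupling statement, since the greedy response of type $1$ is designed to be constant); and (iii) noting that continuous incentive space does not help — the set of incentives that are near-optimal in $\mathcal{E}_1$ and the set near-optimal in $\mathcal{E}_2$ are disjoint, so no single mixed or deterministic strategy can do well on both.

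The main obstacle I anticipate is the continuity of the incentive space: unlike a finite-action lower bound, the principal can play \emph{any} $\pi \in [0,1]$ on arm $3$, so I must ensure that the "good" region for $\mathcal{E}_2$ truly collapses to a measure-zero set (a single threshold point) where the agent flips exactly, and that any incentive strictly below it earns utility $\le \tfrac12$ while any incentive strictly above it overpays by a constant. Concretely, with $v_1 = 0, v_3 = 1$ and type-$2$ preference gap equal to $c$, the utility of incentive $x$ on arm $3$ against a type-$2$ agent is $\tfrac12$-ish (choosing arm $1$) for $x < c$ and $1 - x \le 1 - c$ for $x \ge c$, maximized uniquely near $x = c$; choosing $c \ne c'$ in the two environments with, say, $c = 1/3$ and $c' = 2/3$ forces a constant per-round loss on the misidentified environment. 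I would also double-check the tie-breaking convention (ties broken "arbitrarily but consistently across agent types") does not give the algorithm a sneaky way to hedge at the threshold — resolving this by making the optimal incentive lie slightly on the flipping side so the argmax is unique, avoiding any reliance on tie-breaking.
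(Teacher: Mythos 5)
There is a genuine gap, and it is fatal to the construction as you describe it. Your two-environment Le Cam argument only controls the algorithm's state \emph{up to the switch point}; once the type-$2$ agents start arriving, the feedback (the selected arm) becomes informative. With only two candidate thresholds $c=1/3$ and $c'=2/3$, a single probe at any incentive $x\in(1/3,2/3)$ on arm $3$ reveals which environment you are in (the agent flips in $\mathcal{E}_1$ and does not in $\mathcal{E}_2$), after which the algorithm plays the correct near-optimal incentive for the remaining $\Theta(T)$ rounds and incurs only $O(1)$ regret. Indistinguishability at the switch point is not enough; you would need the observations to remain statistically close throughout the horizon, which they are not. Worse, your proposed fix for the tie-breaking issue --- placing the optimal incentive ``slightly on the flipping side so the argmax is unique'' --- destroys the lower bound even against a continuum of thresholds: if every incentive in a small interval just above the unknown threshold $c$ is near-optimal, the principal can binary-search for $c$ (each probe noiselessly reveals whether $c\le x$), locate it to precision $1/T$ in $O(\log T)$ rounds, and then overpay by only $O(1/T)$ per round, giving $O(\log T)$ total regret rather than $\Omega(T)$.

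The paper's construction avoids both problems by making exact tie-breaking the whole point rather than a nuisance. It takes $v=(1,0.5,0)$, $\mu^1=(0.2,0,0.2+\Delta)$, $\mu^2=(0.2,0.2+\Delta,0)$ with $\Delta$ drawn from the continuum $[0.7,0.71]$, mixes the two types stochastically in \emph{every} round (probabilities $0.4$ and $0.6$), and exploits opposing tie-breaks: at incentive exactly $\Delta$ on arm $1$, type $1$ flips to arm $1$ (reward $1-\Delta$) while type $2$ stays on arm $2$ (reward $0.5$), giving expected reward $\ge 0.416$; any incentive strictly below $\Delta$ leaves type $1$ on the zero-reward arm, and any incentive strictly above $\Delta$ pulls type $2$ onto arm $1$ as well, capping the reward at $1-\Delta\le 0.3$ --- so the optimum is an isolated discontinuity attained only at the exact value $\Delta$. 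A deterministic algorithm queries only finitely many incentive values across all possible histories, so a uniformly random $\Delta$ avoids all of them and forces constant per-round regret; Yao's lemma handles randomization. If you want to salvage your proof, you must reintroduce this ``exact point in a continuum'' structure (learning $\Delta$ to any finite precision must be insufficient), which means embracing the tie-breaking mechanism rather than engineering it away.
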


To provide intuition for \pref{thm:linear_regret_without_preference}, consider the following example:  

\begin{example} \label{ex:linear_regret_without_preference}  
The principal’s reward vector is $v = (1, 0.5, 0)$. The preference vectors of two agents are $\mu^1 = (0.2, 0, 0.2 + \Delta)$ and $\mu^2 = (0.2, 0.2 + \Delta, 0)$, where $\Delta \in [0.7, 0.71]$. When a tie occurs, agent $1$ favors arm $1$, while agent $2$ favors arm $2$. At each round, agent $1$ is selected with probability $0.4$, and agent $2$ is selected with probability $0.6$.  
\end{example}  

In \pref{ex:linear_regret_without_preference}, the incentive vector $\pi^* = (\Delta, 0, 0)$ is uniquely optimal because any deviation from this incentive for arm 1 incurs at least a constant regret. Specifically, the expected reward of $\pi^*$ is $0.6 \cdot 0.5 + 0.4 (1 - \Delta) \geq 0.416$. In contrast, offering an incentive less than $\Delta$ for arm 1 results in an expected reward of at most $0.6 \cdot 0.5 + 0.4 \cdot 0 = 0.3$, and offering an incentive greater than $\Delta$ yields an expected reward of at most $1 - \Delta \leq 0.3$.  

However, the probabilistic selection of agents implies that we must exactly learn $\Delta$ to achieve sub-linear regret, which is not always feasible. Consequently, any algorithm incurs an $\Omega(T)$ regret for some $\Delta \in [0.7, 0.71]$.  For the formal proof, we refer the reader to \pref{app:linear-regret-unknown}. 

\subsection{Lower Bounds for Single-arm Incentive}\label{sec:lower-single-arm}
Consider the greedy choice model, where the principal is restricted to single-arm incentives and has knowledge of the best response function $b(\cdot, \cdot)$. The following theorem establishes a tight lower bound on regret, up to logarithmic factors.
\begin{theorem}\label{thm:lower_bound_KsqrtT}
Suppose the principal and the agents operate under a greedy choice model with single-arm incentives, and the principal has access to the best response function $b(\cdot, \cdot)$. For any $K \geq 3$, $N \geq 3$, $T \geq \text{poly}(K)$, and any algorithm, there exists an instance of the principal-agent problem such that  
\[
R_T = \Omega\left(\min\left\{\sqrt{K T \log(N)/\log(K)}, K \sqrt{T/\log(K)}\right\}\right)
\] 
\end{theorem}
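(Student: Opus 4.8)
The plan is to reduce the principal–agent problem to a standard adversarial multi-armed bandit lower bound by engineering an instance in which the principal's choice of incentive effectively collapses to ``pick one of $M$ candidate incentive vectors,'' and each such choice behaves like pulling one arm of a bandit with an adversarially (here: stochastically, over agent types) chosen loss. I would split into two regimes according to which term in the $\min$ is active, i.e. whether $\log N$ is larger or smaller than $\log K$, and build a family of hard instances in each regime. The core gadget is the one already exhibited in \pref{ex:linear_regret_without_preference}: with reward vector $v$ having a high-value arm $1$ and a ``safe'' middle arm, and with each agent type $j$ having a preference vector that makes type $j$ the unique type which can be steered to the high-value arm only when the incentive on arm $1$ crosses type $j$'s threshold $\Delta_j$. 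Here, crucially, the principal \emph{knows} $b(\cdot,\cdot)$, so it knows all the thresholds; the difficulty is instead that it does not know $j_t$, so it cannot tell which threshold is the ``currently relevant'' one, and the adversary controls the arrival sequence $j_1,\dots,j_T$.

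For the $K\sqrt{T}$ regime (when $\log N \gtrsim \log K$, so the bound is $K\sqrt{T/\log K}$), I would use $N = \Theta(K)$ arms, one ``target'' arm per agent type, and let the principal's only meaningful decision each round be which single target arm to put a (fixed, known-threshold-crossing) incentive on. Offering the incentive to arm $i$ yields high reward exactly when the arriving agent is the matching type and low reward otherwise; by having the adversary draw $j_t$ from a distribution that is uniform except for a small $\Theta(\sqrt{K/T})$ bias toward one secretly-chosen type $j^\star$ (the standard needle-in-a-haystack construction), distinguishing $j^\star$ requires $\Omega(K/\varepsilon^2)$ rounds and yields the $\Omega(K\sqrt{T})$ regret. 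The $\log K$ denominator comes from the fact that we must keep incentive magnitudes and reward gaps bounded in $[0,1]$ while packing $K$ near-indistinguishable sub-instances, which forces the per-round gap down by a $1/\sqrt{\log K}$-type factor; I would track this carefully through a KL / Pinsker computation à la the classic $\Omega(\sqrt{KT})$ bandit lower bound (Auer–Cesa-Bianchi–Freund–Schapire), applied to the induced bandit over the $K$ target arms.

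For the $\sqrt{KT\log N}$ regime (when $N$ is large relative to $K$), the extra $\log N$ must be harvested from the richness of the incentive space rather than from the number of types. The idea is to encode, for a \emph{single} agent type's threshold structure, a binary choice at each of $\Theta(\log N)$ ``coordinates'' of the incentive vector, so that the set of incentive vectors that are simultaneously good forms a combinatorial object of size $N = 2^{\Theta(\log N)}$; then a hard subset of $\Theta(\log N)$ roughly-independent binary decisions, each requiring $\Omega(K/\varepsilon^2)$ samples to get right because type identity is still hidden among $K$ types, stacks up to $\Omega(\sqrt{KT\log N})$ regret. Concretely I would let $b(\cdot,\cdot)$ be such that the optimal incentive is determined coordinate-by-coordinate, and run the information-theoretic argument over a product of $\Theta(\log N)$ two-point sub-instances, reusing the same $1/\sqrt{\log K}$ normalization so that the two regimes combine into the stated $\min$ with a common $1/\sqrt{\log K}$ factor; taking the worse of the two constructions gives the theorem.

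\textbf{Main obstacle.} The delicate part is the second (large-$N$) construction: I need a best-response function $b(\cdot,\cdot)$ — which the principal fully knows — whose induced reward landscape over $\calD = \{x\in[0,1]^N:|\supp(x)|\le 1\}$ genuinely has $2^{\Theta(\log N)} = \Theta(N)$ ``directions'' of ambiguity that can each only be resolved by observing enough rounds, despite the incentive being \emph{single-arm} (so only one coordinate is nonzero per round) and despite the principal observing the selected arm (a $\log N$-bit feedback signal, not one bit). Making the feedback uninformative enough — so that each round reveals essentially one bit about at most one of the $\Theta(\log N)$ hidden decisions, mediated by the unknown type — while keeping all rewards in $[0,1]$ and all thresholds consistent with a legitimate known $b(\cdot,\cdot)$, is where the construction has to be done with care; I expect to handle it by a block/group-testing layout of arms and a correspondingly structured family of preference vectors, then verify the information bound via a chain-rule decomposition over the $\Theta(\log N)$ blocks.
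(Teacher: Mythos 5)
There is a genuine gap, and it is concentrated in your first regime. A needle-in-a-haystack over $K$ candidate incentives with a $\Theta(\sqrt{K/T})$ bias yields a regret lower bound of $\Omega(\sqrt{KT})$, not $\Omega(K\sqrt{T})$: with only $\Theta(K)$ effectively distinct actions and rewards in $[0,1]$, the induced bandit has minimax regret $\Theta(\sqrt{KT\log K})$ at best, and indeed the paper's own upper bound (\pref{thm:UB_single_value}, which gives $O(\sqrt{KT\log N})$) rules out any $\omega(\sqrt{KT\log K})$ lower bound when $N=\Theta(K)$. You have also misplaced the regime boundary: the term $K\sqrt{T/\log K}$ is the binding term of the $\min$ only when $\log N \gtrsim K$ (not $\log N\gtrsim \log K$), and in the paper it is obtained as the \emph{endpoint} of the large-$N$ combinatorial construction at $N\approx 2^{\Theta(K)}$, not from a separate small-$N$ instance. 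The paper's actual structure is: (i) a fixed-preference, varying-agent-distribution needle argument giving $\Omega(\sqrt{KT})$ for all $N\ge 3$ (which covers $N\le K$), and (ii) a combinatorial-bandit construction for $K\le N\le 2^{(K-2)/48}$ giving $\Omega(\sqrt{KT\log N/\log K})$, whose saturation at $N=2^{\Theta(K)}$ yields $\Omega(K\sqrt{T/\log K})$ for all larger $N$.

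Your second-regime sketch is closer in spirit but the quantitative accounting does not produce the factor $K$ under the square root. With $\Theta(\log N)$ binary decisions each costing $\Omega(K/\varepsilon^2)$ samples resolved in parallel over $T$ rounds, the natural bookkeeping gives either $\sqrt{T\log N}$ or $\log N\cdot\sqrt{KT}$, neither of which matches. The paper instead uses $M=\log(N-1)/\log(K-2)$ blocks each containing $K_0/M$ arms (not $2$), with the $K_0$ agent types partitioned across blocks so that each block is an independent $(K_0/M)$-candidate needle problem contributing $\Omega(\sqrt{(K_0/M)T})$; summing over the $M$ blocks via a chain-rule KL decomposition yields $\Omega(\sqrt{MK_0T})=\Omega(\sqrt{KT\log N/\log K})$. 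You also flag but do not resolve the central construction difficulty (a single fixed, known $b(\cdot,\cdot)$ under which a dominant finite set of single-arm incentives carries a linear utility structure over $\{0,1\}^{K-2}$); the paper handles this with a special zero-value arm and two special agent types that force each agent's choice to be binary (incentivized arm versus the special arm), which is what makes the per-round feedback one bit and the chain-rule argument go through.
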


The proof of this theorem is non-trivial, and due to space constraints, we defer it to \pref{app:lower-greedy-single-known}. Here, we outline the key ideas behind our proof.  

First, consider the case where $N \leq K$. We carefully construct the reward vector $v$ and the preference vectors $\mu^j$ such that there exists a set $\Pi = \{\pi^{(1)}, \pi^{(2)}, \dots, \pi^{(K-1)}\}$ of incentive vectors with the property that for any $\pi \in \mathcal{D}$, there exists some $\pi^{(i)} \in \Pi$ satisfying $U(\pi^{(i)}, j) \geq U(\pi, j)$ for all $j \in [K]$. Consequently, we can assume that any algorithm selects an incentive exclusively from $\Pi$ in each round.  

Next, we define a probability distribution $p^{(1,\varepsilon)}$ over the agents, ensuring that in each round, an agent $j_t$ is drawn from this fixed distribution. We construct $p^{(1,\varepsilon)}$ such that the expected utility of choosing the incentive vector $\pi^{(1)}$ is $r + \varepsilon$, while for all other incentive vectors, it remains $r$, where $r \in [0,1/2]$ is some constant.  

Now, suppose the algorithm selects an incentive vector $\pi^{(i)} \neq \pi^{(1)}$ only a limited number of times. We construct an alternate problem instance by instead using a probability distribution $p^{(i,\varepsilon)}$, where the agent $j_t$ is sampled from $p^{(i,\varepsilon)}$ in each round. This new distribution is chosen so that the expected utility of $\pi^{(1)}$ remains $r + \varepsilon$, while that of $\pi^{(i)}$ increases to $r + 2\varepsilon$, and all other incentive vectors continue to yield $r$.  

By appropriately choosing $\varepsilon$ and employing careful KL-divergence-based arguments, we establish that in at least one of these two problem instances, the algorithm incurs a regret of $\Omega(\sqrt{KT})$. While these arguments resemble the standard lower bound analysis for the Multi-Armed Bandit (MAB) problem, a key challenge in our setting is that we must fix the reward vector and preference vectors in advance, constructing multiple instances solely by varying the probability distributions over the agents.

For the case when $N \geq K$, we aim to establish a combinatorial bandit-style lower bound, analogous to the MAB-style lower bound derived for $N \leq K$. At a high level, we begin by defining a bijective mapping $f$ that maps the first $N-1$ arms to a subset of $\{0,1\}^{K-2}$. These $N-1$ arms each yield a reward of roughly $0.5$, while the last arm is a special arm with a reward of $0$. 

We then construct preference vectors for each agent such that there exists a set $\Pi = \{\pi^{(1)}, \pi^{(2)}, \dots, \pi^{(N-1)}\}$ of incentive vectors satisfying the following property: for any $\pi \in \mathcal{D}$, there exists some $\pi^{(i)} \in \Pi$ such that $U(\pi^{(i)}, j) \geq U(\pi, j)$ for all $j \in [K]$. Also for all $i\in [N-1]$,  $(\pi^{(i)})_i>0$ . Consequently, we can assume that any algorithm selects an incentive exclusively from $\Pi$ in each round.

Next, we define a class of distributions for sampling agents such that the utility function exhibits a linear structure over the vectors in $\Pi$. Specifically, we ensure that  
\[
\mathbb{E}_{j\sim p}[U(\pi^{(i)},j)] = \langle f(i), \theta^{(p)} \rangle,
\]  
where $\theta^{(p)} \in [0,1]^{K-2}$ is a reward vector dependent solely on the preference vectors of the agents and the distribution $p$ from which the agents are sampled in each round. The last arm and two special agents who prefer this arm play a crucial role in ensuring the above equality.

Given this setup, we replicate a combinatorial bandit-style lower bound by carefully defining the sample space and computing the corresponding KL-divergences. Ultimately, we establish a lower bound of  
\[
\Omega\left(\min\left\{\sqrt{K T \log(N)/\log(K)}, K \sqrt{T/\log(K)}\right\}\right).
\]
%!TEX root=main.tex

\section{Main Results for Greedy Choice Model}

In this section, we present our algorithms for the greedy choice model, assuming access to the agents' best response functions. The key idea behind our algorithm design is to discretize the decision space and reduce the problem to adversarial linear bandits. We first describe our algorithm for the single-arm incentive in \pref{sec:algo-single-greedy}, followed by the algorithm for the general incentive in \pref{sec:algo-general-greedy}.

% For shorthand, we define $\pi^0(a;c) \in \fR^K_{\geq 0}$ a general single-arm incentive, $\forall a \in [K], \forall c \in \fR_{>0}$:
% \begin{equation}  \label{eq:one_hot_incentive}
%     [\pi^0(a;c)]_{i} = \begin{cases} 
%       c, & \text{if }i=a, \\
%       0, & \text{otherwise}. 
%    \end{cases}
% \end{equation}

\subsection{Algorithm for Single-arm Incentive}\label{sec:algo-single-greedy}
The proposed algorithm consists of two stages: first, discretizing the decision space $\mathcal{D} = \{x \in [0,1]^N : |\supp(x)| \leq 1\}$ into a finite set, and second, reducing the problem to adversarial linear bandits using the discretized set and the best response functions of each agent.  

\textbf{Tie Breaking.} For simplicity of presentation, we assume that in the case of a tie, an agent always prefers the incentivized arm, and each preference vector has a unique maximum. Our approach can be suitably modified to avoid relying on this assumption. For further details, we refer the reader to \pref{app:upper-greedy-single}.  

\textbf{Discretization.}  To construct a finite single-arm incentive set, we begin with an empty set $\Pi$. Observe that $\max_{k \in [N]} \mu^j_k - \mu^j_i$ represents the minimum incentive required to entice agent $j$ to choose arm $i$.  
For each arm $i \in [N]$ and agent $j \in [K]$, we define an incentive vector $\pi^{i,j} \in \mathcal{D}$ such that for all $s \in [N]$:  
\[
(\pi^{i,j})_s =
\begin{cases} 
\max_{k \in [N]} \mu^j_k - \mu^j_i , & \text{if } s = i, \\
0, & \text{otherwise}.
\end{cases}
\]
For all $i \in [N]$ and $j \in [K]$, we add the incentive vector $\pi^{i,j}$ to $\Pi$, resulting in $|\Pi| = \mathcal{O}(NK)$. The set $\Pi$ has the key property that for any $\pi \in \mathcal{D}$, there exists a vector $\pi^{i,j} \in \Pi$ such that $U(\pi^{i,j}, \widehat{j}) \geq U(\pi,\widehat{j})$ for all $\widehat{j} \in [K]$.  

When $N = \Omega(2^K)$ is exponentially large, we can further refine the discretized set to make it independent of $N$. Intuitively, when \( N \) is large, some incentive vectors elicit the same response in terms of the selection of their incentivized arms. In such cases, we retain only the most rewarding incentive vector rather than including all equivalent ones in the reduced set.  

To formalize this, we define a mapping $h: \Pi \to \{0,1\}^K$ such that for any $\pi \in \Pi \setminus \{\mathbf{0}\}$ and $j \in [K]$, where $\mathbf{0}$ is the zero-incentive vector $(0,0,\dots,0)$, we have:  
\[
(h(\pi))_j =
\begin{cases} 
1, & \text{if } b(\pi, j) = a(\pi), \\
0, & \text{otherwise},
\end{cases}
\]
where $a(\pi) := \arg\max_{i \in [N]} \pi_i$ denotes the index of the largest coordinate of $\pi$.  Observe that if $b(\pi,j)\neq a(\pi)$ then $b(\pi,j)=b(\mathbf{0},j)$.

Using this mapping, we construct a reduced set $\widehat{\Pi}$ as follows. We initialize $\widehat{\Pi}$ as an empty set. For each vector $s \in \{0,1\}^K$, define  
\[
\Pi_s := \{\pi \in \Pi \setminus \{\mathbf{0}\} : h(\pi) = s\}.
\]
If $|\Pi_s| = 1$, we add the unique $\pi \in \Pi_s$ to $\widehat{\Pi}$. If $|\Pi_s| > 1$, we have $U(\pi,j)=s_j\cdot (v_{a(\pi)} - \pi_{a(\pi)})+(1-s_j)\cdot v_{b(\mathbf{0},j)}$ for all $\pi\in\Pi_s$. Thus, we add only one of the maximizers, $\hat{\pi} = \arg\max_{\pi \in \Pi_s} v_{a(\pi)} - \pi_{a(\pi)}$, to $\widehat{\Pi}$. We also include the zero-incentive vector $\mathbf{0}$ in $\widehat{\Pi}$.  Since $|\{0,1\}^K| = 2^K$ and we select at most one vector per $\Pi_s$, we obtain $|\widehat{\Pi}| \leq \min\{2KN, 2^K\} + 1$.

Since $\widehat{\Pi}$ is obtained by removing only suboptimal incentive vectors from $\Pi$, it retains the key property that for any $\pi \in \mathcal{D}$, there exists $\pi^{i,j} \in \widehat{\Pi}$ satisfying $U(\pi^{i,j}, \widehat{j}) \geq U(\pi, j)$ for all $\widehat{j} \in [K]$. Given this fact, it suffices to focus only on $\widehat{\Pi}$ for regret minimization.  

A natural approach is to treat each incentive $\pi \in \widehat{\Pi}$ as an arm, thereby reducing the problem to an adversarial multi-armed bandit setting. However, this reduction results in a regret bound of $\mathcal{O}(\sqrt{\min \{KN, 2^K\} T})$ if one, for example, applies the Tsallis-INF algorithm \citep{zimmert2021tsallis}.  

Given the lower bound established in \pref{sec:lower-single-arm}, this raises the question: can we adopt a different approach to improve the upper bound?

% \begin{algorithm}[t]
% \DontPrintSemicolon
% \caption{Discretization for Single-arm Incentive}
% \textbf{Input}: preference vectors $\{\mu^j\}_{j \in [K]}$ and reward vector $v$.

% \textbf{Initialize}: $\Pi=\emptyset$ and $\widehat \Pi =\emptyset$.

% Set $\Pi= \cup_{j \in [K]} \cup_{i \in [N]}  \{\pi^{i,j} \}$ where $\pi^{i,j}$ is given in \pref{eq:one_hot_incentive}.

% Let $\Pi_s:=\{\pi\in \Pi:h(\pi)=s\}$, $\forall s\in \{0,1\}^K$, where $h(\pi)$ is given in \pref{eq:mapping_h}.

% Let $\pi_s =\arg\max_{\pi\in \Pi_s}v_{a(\pi)}-\pi_{a(\pi)} $.

% Set $\widehat \Pi = \cup_{s\in \{0,1\}^K } \{\pi_s\}$.

% \textbf{Output}: $\widehat \Pi$.
% \label{alg:single_arm_reduction}
% \end{algorithm}

\textbf{Reduction to Adversarial Linear Bandits.}  We now improve the upper bound by proposing a simple yet effective reduction to the adversarial linear bandits problem. This approach ensures that our upper bound matches the lower bound up to logarithmic factors in $K$.  

We begin by constructing an arm set $\calZ \subseteq \mathbb{R}^{K}$. For each incentive $\pi \in \widehat{\Pi}$, we define the corresponding vector $z^\pi \in \mathbb{R}^K$ as  
\[
(z^\pi)_j = U(\pi,j), \quad \forall j \in [K].
\]  
We then add $z^\pi$ to $\calZ$. Next, we define the reward vector $y_t \in \mathbb{R}^{K}$. If agent $j_t$ arrives in round $t$, we set  
\[
(y_t)_j =
\begin{cases} 
1, & \text{if } j = j_t, \\ 
0, & \text{otherwise}. 
\end{cases}
\]  
With these definitions, for any $\pi \in \widehat{\Pi}$, we observe that $U(\pi, j_t) = \langle z^\pi, y_t \rangle.$ Thus, our pseudo-regret $R_T$ is equal to the adversarial linear bandit pseudo-regret:  
\[
R_T = \max_{z \in \calZ} \mathbb{E} \left[ \sum_{t=1}^T \langle z, y_t \rangle - \sum_{t=1}^T \langle z_t, y_t \rangle \right].
\]  

For an adversarial linear bandit problem over a decision set $\calX \subset \mathbb{R}^d$ with horizon $T$ and rewards bounded in \([-1,1]\), let $R_{\Alg}(T, \calX)$ denote the pseudo-regret of an algorithm $\Alg$. If $\Alg$ is the \textsc{Exp3} algorithm for linear bandits \citep{lattimore2020bandit}, it satisfies  
\[
R_{\Alg}(T, \calX) \leq \order(\sqrt{dT \log(|\calX|)}).
\]  
Since our reduction is an instance of the adversarial linear bandit problem, we obtain the following result:  

\begin{theorem} \label{thm:UB_single_value}  
Suppose the principal and the agents operate under a greedy choice model with single-arm incentives, and the principal has access to the best response function $b(\cdot, \cdot)$. Then, there exists an algorithm for this principal-agent problem whose pseudo-regret is upper bounded by  
\[
R_T \leq \tilde\order\left(\min\left\{\sqrt{K T \log(N)}, K \sqrt{T}\right\}\right).
\]  
\end{theorem}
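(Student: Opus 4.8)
The plan is to combine the discretization construction described just above the theorem statement with the stated reduction to adversarial linear bandits, and then simply invoke an off-the-shelf bandit regret bound. Concretely, I would first verify the key structural claim about the reduced set $\widehat{\Pi}$: for every $\pi \in \mathcal{D}$ there is some $\widehat\pi \in \widehat{\Pi}$ with $U(\widehat\pi, j) \geq U(\pi, j)$ for all $j \in [K]$. This follows in two steps. First, for the unrefined set $\Pi = \{\pi^{i,j}\}$, given arbitrary $\pi \in \mathcal{D}$ with incentivized arm $i$ (or the zero vector), each agent $j$ best-responds by picking either $i$ (if the incentive exceeds the threshold $\max_k \mu^j_k - \mu^j_i$) or $b(\mathbf{0},j)$; the vector $\pi^{i,j^\star}$ where $j^\star$ requires the largest threshold dominates $\pi$ coordinatewise on the utility vector, because raising the incentive on $i$ only lowers $U(\cdot, j)$ when $j$ plays $i$ and never changes it otherwise. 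Second, the refinement from $\Pi$ to $\widehat\Pi$ only discards vectors $\pi$ for which some retained $\hat\pi$ with $h(\hat\pi) = h(\pi)$ has $v_{a(\hat\pi)} - \hat\pi_{a(\hat\pi)} \geq v_{a(\pi)} - \pi_{a(\pi)}$, and for such pairs the displayed identity $U(\pi,j) = s_j(v_{a(\pi)} - \pi_{a(\pi)}) + (1-s_j) v_{b(\mathbf{0},j)}$ gives $U(\hat\pi,j) \geq U(\pi,j)$ for all $j$. Hence restricting attention to $\widehat\Pi$ loses nothing in regret, and $|\widehat\Pi| \leq \min\{2KN, 2^K\} + 1$.

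Next I would set up the linear-bandit embedding exactly as in the excerpt: map each $\pi \in \widehat\Pi$ to $z^\pi \in \mathbb{R}^K$ with $(z^\pi)_j = U(\pi,j) \in [0,1]$, collect these into $\mathcal{Z}$, and let $y_t = e_{j_t}$ be the indicator of the arriving agent type. Then $\langle z^\pi, y_t\rangle = U(\pi, j_t)$ identically, so the principal's pseudo-regret against the best fixed incentive in $\mathcal{D}$ equals (by the domination property of $\widehat\Pi$) the pseudo-regret against the best fixed arm in $\mathcal{Z}$ of the induced adversarial linear bandit problem in dimension $d = K$ with rewards in $[0,1] \subseteq [-1,1]$ and $|\mathcal{Z}| = |\widehat\Pi|$. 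One subtlety worth a sentence: the agent arrival sequence $j_t$ (equivalently $y_t$) is adversarial and may depend on the algorithm's past plays, which is exactly the setting adversarial linear bandit regret bounds handle, so the reduction is legitimate; the only thing the algorithm needs is to be able to compute $\mathcal{Z}$, which it can since it knows $v$, the preference vectors, and $b(\cdot,\cdot)$.

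Finally, I would apply the quoted guarantee for \textsc{Exp3} for linear bandits, $R_{\Alg}(T, \mathcal{X}) \leq \mathcal{O}(\sqrt{dT \log |\mathcal{X}|})$, with $d = K$ and $|\mathcal{X}| = |\widehat\Pi| \leq \min\{2KN, 2^K\}+1$. This yields $R_T \leq \mathcal{O}(\sqrt{KT \log(\min\{2KN, 2^K\}+1)}) = \tilde{\mathcal{O}}(\min\{\sqrt{KT\log N}, K\sqrt{T}\})$, since $\log(2KN) = \mathcal{O}(\log N)$ up to the $\log K$ absorbed in $\tilde{\mathcal{O}}$ and $\log(2^K) = \mathcal{O}(K)$, giving the claimed $\min$. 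I don't anticipate a serious obstacle — the content is entirely in the discretization, which is already spelled out before the statement; the main thing to get right carefully is the domination argument for $\Pi$ and its preservation under the refinement to $\widehat\Pi$ (in particular handling the tie-breaking convention and the $\mathbf{0}$ vector cleanly), plus stating precisely which adversarial linear bandit theorem is being invoked and checking its hypotheses (oblivious vs.\ adaptive adversary, reward scaling) match. For the general tie-breaking case one defers to \pref{app:upper-greedy-single} as noted.
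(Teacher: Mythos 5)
Your proposal is correct and follows essentially the same route as the paper: the threshold-based discretization $\Pi=\{\pi^{i,j}\}$, the refinement to $\widehat\Pi$ via the map $h$ into $\{0,1\}^K$ giving $|\widehat\Pi|\le\min\{2KN,2^K\}+1$, the embedding $(z^\pi)_j=U(\pi,j)$ with $y_t=e_{j_t}$, and the $\order(\sqrt{dT\log|\calX|})$ bound for \textsc{Exp3} on linear bandits with $d=K$. The domination argument you sketch (choosing $j^\star$ with the largest threshold not exceeding the offered incentive, so the same set of agents switches at lower cost) and the deferral of general tie-breaking to the appendix match the paper's treatment.
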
  

% \JL{the principal has access to the best response function $b(\cdot, \cdot)$ sounds strong? I think tie-breaking rule is embedded in the best response function. Readers may ask do we need the knowledge of tie-breaking rule.}

% \arn{I think the above theorem is pretty obvious as we say at the start that we are doing a reduction. Instead, we can make better use of space by directly stating the main regret bounds of our algorithm.}

% To provide a more intuitive regret bound, we present the following corollary by instantiating $\Alg$ as linear-$\textsc{Exp3}$ with $G$-optimal design \citep[Algorithm 15 in Section 27]{lattimore2020bandit}.

% \begin{corollary} \label{corr:UB_single_value}
% Running linear-$\textsc{Exp3}$ with $G$-optimal design incurs regret $\order \big( \min\{\sqrt{KT\log(KN)}, K\sqrt{T}\} \big)$.
% \end{corollary}

% The regret bound in \pref{corr:UB_single_value} shows tightness of both upper and lower bounds.
% Note that any algorithm, not necessarily linear-$\textsc{Exp3}$,
% can be employed in a black-box manner, yielding other bounds in the corollary.

\subsection{Algorithm for General Incentive}\label{sec:algo-general-greedy}
The proposed algorithm consists of two stages: first, discretizing the decision space $\mathcal{D} = [0,1]^N$ into a finite set, and second, reducing the problem to adversarial linear bandits using the discretized set and the best response functions of each agent.

\textbf{Tie Breaking.} To simplify the presentation, let us assume that agents resolve ties in a hierarchical manner. Specifically, each agent \(j\) has a bijective mapping \(f_j: [N] \to [N]\), and in the event of a tie among a set of arms \(\calI \subseteq [N]\), agent \(j\) selects the arm \(\arg\max_{i \in \calI} f_j(i)\). Other tie-breaking rules can be handled in a similar fashion.

\textbf{Discretization.} For the general incentive case, the principal’s decision space is given by \(\calD = [0,1]^N\). The algorithm for this case builds upon the reduction framework employed by the single-arm incentive algorithm but introduces a different discretization approach. 

Let $\sigma \in [N]^K$ be a $K$-dimensional tuple. Define $\calP_{\sigma} := \{\pi \in [0,1]^N : b(\pi, j) = \sigma_j \;\forall j \in [K]\}$, and let $\widehat{\calP}_{\sigma}$ denote the closure of $\calP_{\sigma}$. If $\widehat{\calP}_{\sigma}$ is non-empty, then it forms a convex polytope defined by the following set of linear inequalities:  
\[
    0 \leq \pi_i \leq 1 \quad \forall i \in [N],
\]  
\[
    \mu^j_{\sigma_j} + \pi_{\sigma_j} \geq \mu^j_i + \pi_i \quad \forall i \in [N] \setminus \{\sigma_j\}, \; j \in [K].
\]  
If $\calP_{\sigma}$ is non-empty, it is an open convex polytope, as some of the inequalities above become strict.

Now, let $\Sigma := \{\sigma \in [N]^K : \calP_{\sigma} \neq \emptyset\}$. We define two incentive vectors $\pi$ and $\hat{\pi}$ to be $\varepsilon$-close if $\|\pi - \hat{\pi}\|_{\infty} \leq \varepsilon$. For $\sigma \in \Sigma$, we construct a set of incentive vectors $\Pi_{\sigma}$ as follows: for each extreme point of $\widehat{\calP}_{\sigma}$, we select an $\varepsilon$-close vector in $\calP_{\sigma}$ and include it in $\Pi_{\sigma}$. We now claim that $|\Pi_{\sigma}| \leq \text{poly}(N, K)^N$. 

First note that any extreme point of $\widehat\calP_{\sigma}$ is an intersection of $N$-linearly independent half-spaces defining this polytope. Now observe that the number of half spaces used to define this polytope is at most $\text{poly}(N,K)$. Hence, the number of extreme points is upper bounded by $(\text{poly}(N,K))^N$. Since each point in $\Pi_{\sigma}$ is associated with exactly one extreme point of $\widehat \calP_{\sigma}$, we have $|\Pi_{\sigma}|\leq (\text{poly}(N,K))^N$. 

Let $\Pi := \bigcup\limits_{\sigma \in \Sigma} \Pi_{\sigma}$. We claim that for any sequence of agents $j_1, j_2, \ldots, j_T$, the following inequality holds:  
\[
\max_{\pi \in \Pi} \sum_{t=1}^T U(\pi, j_t) \geq \sup_{\pi \in [0,1]^N} \sum_{t=1}^T U(\pi, j_t) - 2\varepsilon T.
\]

Consider an incentive vector $\pi^{\star}\in [0,1]^N$ such that  
\[
\sum_{t=1}^T U(\pi^{\star}, j_t) \geq \sup_{\pi \in [0,1]^N} \sum_{t=1}^T U(\pi, j_t) - \varepsilon T.
\]  
Let $\pi^\star \in \calP_{\sigma}$, where $\sigma$ corresponds to the behavior induced by $\pi^{\star}$. Since the maximum of a linear function over a closed polytope is attained at one of its extreme points, there exists an extreme point $\widehat{\pi}$ of $\widehat{\calP}_{\sigma}$ such that  
\[
\sum_{t=1}^T v_{\sigma_{j_t}} - \widehat{\pi}_{\sigma_{j_t}} \geq \sum_{t=1}^T v_{\sigma_{j_t}} - \pi^{\star}_{\sigma_{j_t}} = \sum_{t=1}^T U(\pi^{\star}, j_t).
\]  
Let $\tilde{\pi}$ be the vector in $\Pi_{\sigma}$ that is $\varepsilon$-close to $\widehat{\pi}$. Now, we have the following:  
\begin{align*}
    \sum_{t=1}^TU(\tilde\pi,j_t)&=\sum_{t=1}^Tv_{\sigma_{j_t}}-\tilde\pi_{\sigma_{j_t}}\\
    &= \sum_{t=1}^T(v_{\sigma_{j_t}}-\widehat \pi_{\sigma_{j_t}})-\sum_{t=1}^T(\tilde \pi_{\sigma_{j_t}}-\widehat\pi_{\sigma_{j_t}})\\
    &\geq \sum_{t=1}^T(v_{\sigma_{j_t}}-\widehat \pi_{\sigma_{j_t}})-\sum_{t=1}^T|\tilde \pi_{\sigma_{j_t}}-\widehat\pi_{\sigma_{j_t}}|\\
    &\geq \sum_{t=1}^T(v_{\sigma_{j_t}}-\widehat \pi_{\sigma_{j_t}})-\varepsilon T\tag{as $||\widehat \pi-\tilde\pi||_{\infty}\leq \varepsilon$}\\
    &\geq \sup_{\pi\in[0,1]^N}\sum_{t=1}^TU(\pi,j_t) -2\varepsilon T
\end{align*}

If we set $\varepsilon = \frac{1}{T}$, then it suffices to focus on the set of incentive vectors $\Pi$ for our regret minimization problem.

\textbf{Reduction to Adversarial Linear Bandits.} We now show a reduction of this problem to an adversarial linear bandit problem. First, we construct a set $\calZ \subset \mathbb{R}^K$ as follows: for each $\pi \in \Pi$, we add $z^\pi$ to $\calZ$, where $(z^\pi)_j = U(\pi, j)$ for all $j \in [K]$. Next, we define the reward vector $y_t \in \mathbb{R}^K$. If agent $j_t$ arrives in round $t$, we set $(y_t)_j = 1$ if $j = j_t$ and $(y_t)_j = 0$ otherwise. Observe that for any $\pi \in \Pi$, $U(\pi, j_t) = \langle z^\pi, y_t \rangle$. Hence, our pseudo-regret $R_T$ is equal to the adversarial linear bandit pseudo-regret:  :  
\[
R_T = \max_{z \in \calZ}\mathbb{E}\left[\sum_{t=1}^T \langle z, y_t \rangle - \sum_{t=1}^T \langle z_t, y_t \rangle\right].
\]  

Using the \textsc{EXP3} algorithm for linear bandits, we obtain a regret upper bound of $\text{poly}(K, N) \cdot \sqrt{T}$ for our regret minimization problem. However, this upper bound can be further improved as follows. Let $\calR = \{e_j\}_{j \in [K]}$ be the set of all possible reward vectors. Define $\calZ_0$ as the smallest subset of $\calZ$ such that  
\[
\max_{z \in \calZ} \min_{z' \in \calZ_0} \max_{y \in \calR} |\langle z - z', y \rangle| \leq \frac{1}{T}.
\]  

By the above inequality, it suffices to focus on $\calZ_0$ for our regret minimization problem. It can be shown that $|\calZ_0| \leq (6KT)^K$ (see Chapter 27 of \citet{lattimore2020bandit}). Using the \textsc{EXP3} algorithm for linear bandits on the reduced arm set $\calZ_0$, we achieve a regret upper bound of $\order\left(K \sqrt{T \log(KT)}\right)$. This leads to the following theorem:

\begin{theorem} \label{thm:UB_general_greedy}  
Suppose the principal and agents operate under a greedy choice model with general incentives, and the principal has access to the best response function $b(\cdot, \cdot)$. Then, there exists an algorithm for this principal-agent problem whose pseudo-regret is upper bounded by  
\[
R_T \leq \order\left(K \sqrt{T \log(KT)}\right).
\]  
\end{theorem}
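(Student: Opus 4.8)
The plan is to assemble the pieces already developed in the preceding discussion into a clean argument. First I would invoke the discretization step: we have constructed $\Pi = \bigcup_{\sigma \in \Sigma} \Pi_\sigma$ with $|\Pi| \le |\Sigma| \cdot \max_\sigma |\Pi_\sigma| \le N^K \cdot (\text{poly}(N,K))^N$, and we showed that with $\varepsilon = 1/T$ one has $\max_{\pi \in \Pi} \sum_{t=1}^T U(\pi,j_t) \ge \sup_{\pi \in [0,1]^N} \sum_{t=1}^T U(\pi,j_t) - 2$. Hence it suffices to bound the regret of an algorithm that competes only against $\Pi$, losing an additive $O(1)$ term. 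Second I would carry out the reduction to adversarial linear bandits exactly as described: set $d = K$, arm set $\calZ = \{z^\pi : \pi \in \Pi\} \subset \mathbb{R}^K$ with $(z^\pi)_j = U(\pi,j) \in [-1,1]$, and reward vectors $y_t = e_{j_t}$, so that $U(\pi,j_t) = \langle z^\pi, y_t\rangle$ and $R_T$ equals the linear-bandit pseudo-regret over $\calZ$ with rewards bounded in $[-1,1]$.

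The key refinement is the covering step. Running \textsc{Exp3} for linear bandits directly on $\calZ$ gives regret $O(\sqrt{K T \log|\calZ|}) = \text{poly}(K,N)\sqrt{T}$, which has an unwanted $N$-dependence. To remove it, I would pass to a minimal subset $\calZ_0 \subseteq \calZ$ that is a $\frac1T$-cover of $\calZ$ with respect to the seminorm $z \mapsto \max_{y \in \calR}|\langle z,y\rangle| = \max_{j\in[K]}|z_j| = \|z\|_\infty$, where $\calR = \{e_j\}_{j\in[K]}$ is the (finite) set of possible reward vectors. The point is the volumetric bound $|\calZ_0| \le (6KT)^K$: since each $z^\pi \in [-1,1]^K$ and we are covering in $\ell_\infty$ at scale $1/T$, a standard covering-number argument (as in Chapter~27 of \citet{lattimore2020bandit}) gives this bound independently of $N$ and $|\Pi|$. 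Replacing $\calZ$ by $\calZ_0$ costs at most $\frac1T \cdot T = O(1)$ additional regret over the horizon (any $z \in \calZ$ is within $1/T$ of some $z' \in \calZ_0$ in the relevant seminorm, so $\sum_t \langle z - z', y_t\rangle \le 1$), and then \textsc{Exp3} on $\calZ_0$ yields
\[
R_T \le O\!\left(\sqrt{K T \log|\calZ_0|}\right) + O(1) \le O\!\left(\sqrt{K T \cdot K \log(6KT)}\right) = O\!\left(K\sqrt{T\log(KT)}\right),
\]
which is the claimed bound; the $O(1)$ discretization and covering slacks are absorbed.

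The main obstacle — and the step deserving the most care — is establishing the covering bound $|\calZ_0| \le (6KT)^K$ and verifying that the $\ell_\infty$ seminorm $\max_{y \in \calR}|\langle z - z', y\rangle|$ is the right object: one must check that controlling $\|z - z'\|_\infty$ is exactly what makes $\sum_{t=1}^T \langle z - z', y_t\rangle$ small for every realized agent sequence (true because each $y_t$ is a standard basis vector, so the inner product is a single coordinate of $z - z'$). One should also confirm that passing to $\calZ_0$ does not break the reduction — i.e. that the comparator guarantee survives, which it does since the near-optimal $z^{\pi}$ (for the $\pi \in \Pi$ achieving the max) is itself $\frac1T$-close to some element of $\calZ_0$, giving a total additive loss of $2 + 1 = O(1)$ relative to $\sup_{\pi\in[0,1]^N}\sum_t U(\pi,j_t)$. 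Everything else is a direct citation of the \textsc{Exp3}-for-linear-bandits regret guarantee stated earlier in the section.
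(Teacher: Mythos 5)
Your proposal is correct and follows essentially the same route as the paper: the extreme-point discretization of the polytopes $\widehat{\calP}_\sigma$ with $\varepsilon = 1/T$, the reduction to a $K$-dimensional adversarial linear bandit via $(z^\pi)_j = U(\pi,j)$ and $y_t = e_{j_t}$, and the passage to a $\tfrac{1}{T}$-cover $\calZ_0$ of size at most $(6KT)^K$ before running \textsc{Exp3}. The only difference is that you spell out the covering argument (the seminorm reducing to $\|\cdot\|_\infty$ and the volumetric bound on $[-1,1]^K$) where the paper simply cites Chapter 27 of \citet{lattimore2020bandit}, which is a welcome addition rather than a deviation.
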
  

% Specifically, for every $\sigma \in [N]^K$, we define 

% \begin{align*}
%     \mathcal{P}_\sigma := \{\pi \in [0, 1]^N \,|\, b_j(\pi) = \sigma_j \quad \forall j \in [K] \},
% \end{align*}

% and $\hat{\mathcal{P}}_\sigma$ to be its closure. We show that $\hat{\mathcal{P}}_\sigma$ is a polytope, then we define a set of incentive vectors $\Pi_\sigma$, by adding a vector $\pi \in \mathcal{P}_\sigma$ with $l_\infty$-distance of at most $\epsilon$ for each extreme point of $\hat{\mathcal{P}}_\sigma$.

%\arn{I can help fill this section if required.}

\section{Main Results for Smooth Choice Model}
In this section, we present an algorithm for the setting where agent types and their preference model are unknown but assumed to be a smooth function of the incentive vector.  

%\subsection{Single-arm Incentive}
%\arn{Don't read this yet}
\subsection{Algorithm for Single-Arm Incentives}
The proposed algorithm comprises two stages. First, it discretizes the decision space $\mathcal{D} = \{x \in [0,1]^N : |\supp(x)| \leq 1\}$ into a finite set. Second, it runs the adversarial multi-armed bandit algorithm, Tsallis-INF, on the discretized set.  

\paragraph{Discretization.} Fix $\varepsilon>0$. To construct a finite single-arm incentive set, we begin with an empty set $\Pi$. For all $i\in [N]$ and $j\in [1/\varepsilon+1]$, we define an incentive vector $\pi^{i,j} \in \mathcal{D}$ such that for all $s \in [N]$:  
\[
(\pi^{i,j})_s =
\begin{cases} 
(j-1)\cdot\varepsilon, & \text{if } s = i, \\
0, & \text{otherwise}.
\end{cases}
\] 

\paragraph{Algorithm and its Regret guarantee.}We run Tsallis-INF by treating each vector in $\Pi$ as an arm. Let $\pi^* \in \mathcal{D}$ be the optimal fixed incentive in hindsight against a sequence of agents $j_1, j_2, \dots, j_T$. Consider a vector $\widehat{\pi} \in \Pi$ such that $|\widehat{\pi}_{\hat{i}} - \pi^*_{\hat{i}}| \leq \varepsilon$ for some index $\hat{i} \in [N]$. For any agent $j$, we have the following:
\begin{align*}
   &\quad\mathbb{E}[U({\pi^*}, j)]-\mathbb{E}[U(\widehat{\pi}, j)]\\
   &=\sum_{i=1}^N\left(Pr[a({\pi^*}, j)=i]-Pr[a(\widehat{\pi}, j)=i]\right)\cdot v_{i}\\
   &+Pr[a(\widehat{\pi}, j)=i]\cdot \widehat\pi_{\hat i}- Pr[a(\pi^*, j)=i]\cdot \pi^*_{\hat i}\\
   &\leq\sum_{i=1}^N\left(Pr[a({\pi^*}, j)=i]-Pr[a(\widehat{\pi}, j)=i]\right)\cdot v_{i}\\
   &+(Pr[a(\widehat{\pi}, j)=i]- Pr[a(\pi^*, j)=i])\cdot \pi^*_{\hat i}+\varepsilon\\
   &\leq 2\cdot\varepsilon \cdot L+\varepsilon \tag{\pref{assum:lipschitz} and $0\leq v_i,\pi^*_{\hat i}\leq 1$}
\end{align*}

% Let $\pi_t\in \Pi$ denote the vector chosen by Tsallis-INF in round $t$. Now we have the following:
% The smoothness makes discretization of incentive value possible, as \pref{assum:lipschitz} implies that if incentive $\pi$ that is within $\epsilon$ distance of the optimal incentive $\pi^*$, then the expected reward is at most $\epsilon L$ less than optimal. So we discretize decision space $\calD=\cbr{x \in [0,1]^N: |\supp(x)|  \leq 1}$ by intervals of size $\epsilon$, and run a EXP3 on the set of middle points of the $N\epsilon^{-1}$ intervals ($\bar{\calD}$). Let $\pi^* \in \calD$ be the best incentive, and $\bar{\pi} \in \bar{\calD}$ be the best incentive in our discretized set in hindsight after $T$ rounds, then the resulting regret would be bounded:
The regret of our approach is upper bounded as: 
\begin{align*}
    R_T&\le \sum_{i = 1}^T\mathbb{E}[U(\widehat{\pi}, j_t) - U(\pi_t, j_t)]\\
    &\quad+ \sum_{i = 1}^T\mathbb{E}[U(\pi^*, j_t) - U(\widehat{\pi}, j_t)]\\
    &\le {\mathcal{O}} (\sqrt{N \epsilon^{-1} T } +T\cdot(2L+1)\cdot\epsilon). 
\end{align*}
% $R_T\le \sum_{i = 1}^T\mathbb{E}[U(\widehat{\pi}, j_t) - U(\pi_t, j_t)] + \sum_{i = 1}^T\mathbb{E}[U(\pi^*, j_t) - U(\widehat{\pi}, j_t)]\le {\mathcal{O}} ( \sqrt{N \epsilon^{-1} T } +T\cdot(2L+1)\cdot\epsilon  )$. 
Setting $\varepsilon = N^{1/3}(2L+1)^{-2/3}T^{-1/3}$, would make the regret bounded by $\mathcal{O}( (2L+1)^{1/3}N^{1/3} T^{2/3} )$. 

\textbf{Remark:} It can be easily shown that without knowing $L$, any fixed discretization's regret will scales linearly in $L$.

\subsection{Lower Bound for Single-Arm Incentives}
In the following theorem, we show the our regret upper bound is optimal.

\begin{theorem}
\label{thm:lipschitz-lowerbound}
    For all $N, T$, and $L \ge 3$, if \pref{assum:lipschitz} holds for all agents, then we have
    \begin{align*}
        \inf_{\mathcal{A} \in \text{Alg}} \sup_{j \in J} \E[R_T^{\mathcal{A}, j}] = \Omega \rbr{L^{1/3}N^{1/3}T^{2/3}},
    \end{align*}
    where $J$ is the set of agent types, and $R_T^{\mathcal{A}, j}$ is the regret when only agents of type $j$ are selected. 
\end{theorem}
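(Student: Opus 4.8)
The plan is to reduce the Lipschitz smooth-choice lower bound to a standard $\Omega(N^{1/3}T^{2/3})$-style lower bound for Lipschitz bandits (cf. the continuum-armed / nonparametric bandit lower bounds of Kleinberg, and the $T^{2/3}$ lower bounds for bandits with an additional bias/estimation term). Since $R_T^{\mathcal{A},j}$ fixes a single agent type $j$, the adversarial-arrival aspect disappears and we are left with a single unknown choice model $\pi \mapsto \Pr[a(\pi,j)=\cdot]$ satisfying the Lipschitz property of \pref{assum:lipschitz}. The principal's per-round expected utility is $g(\pi) := \mathbb{E}[U(\pi,j)] = \sum_i \Pr[a(\pi,j)=i](v_i - \pi_i)$, which as a function of the scalar incentive level on a single incentivized arm is itself Lipschitz (with a constant $O(L)$, combining the Lipschitzness of the selection probabilities with the boundedness of $v_i-\pi_i$ and the explicit $-\pi_i$ term). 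So I would build a family of hard instances indexed by which arm is incentivized and at what level, where the reward landscape has a single narrow ``bump'' of height $\Theta(\epsilon)$ and width $\Theta(\epsilon/L)$ that the learner must localize.

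The construction I would use: take $N$ arms, reward vector $v = (1, 0, 0, \dots, 0)$ say, and design the agent's choice model so that incentivizing arm $1$ at level $\pi_1$ gives the learner essentially zero signal, while incentivizing some ``special'' arm $i^\star \in \{2,\dots,N\}$ at a level in a small target window $[\rho^\star, \rho^\star + w]$ (with $w = \Theta(\epsilon/L)$) causes the agent to switch, with probability ramping up Lipschitz-ly in $\pi_{i^\star}$, to arm $1$ (reward $1$) instead of arm $i^\star$ (reward $0$), netting the principal an extra $\Theta(\epsilon)$ expected utility per round — but only inside that window, and the window's location $\rho^\star$ is unknown. For the ``null'' instance no such bump exists anywhere. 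Discretizing the incentive axis at resolution $w$ yields $\Theta(N/(w\cdot\,\text{stuff}))$ — more precisely $\Theta(N \cdot L/\epsilon)$-ish — candidate bump locations; a round that does not land in the true window gives an observation whose law is identical to the null instance, so by a standard KL / Pinsker argument (the observed quantity is the chosen arm, a categorical variable, and the total-variation gap between bump and null on an in-window query is $O(\epsilon)$), any algorithm needs $\Omega(\,1/\epsilon^2\,)$ in-window pulls to detect the right window, while there are $\Omega(NL/\epsilon)$ windows. Balancing detection cost against the per-round regret $\Theta(\epsilon)$ incurred whenever not exploiting, the optimal $\epsilon$ is $\epsilon = \Theta\big((NL)^{1/3} T^{-1/3}\big)$, giving $R_T = \Omega(\epsilon T) = \Omega\big((NL)^{1/3} T^{2/3}\big) = \Omega(L^{1/3}N^{1/3}T^{2/3})$, as claimed. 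I would phrase the averaging over the $\Omega(NL/\epsilon)$ bump-location instances via the usual ``fix the algorithm, average over instances, some instance is bad'' argument, using the chain rule for KL divergence over the $T$ rounds and the fact that only in-window rounds contribute to the divergence.

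The two places requiring genuine care — and the main obstacle — are (i) actually exhibiting a valid Lipschitz choice model realizing this bump, i.e.\ writing down $\Pr[a(\pi,j)=i]$ as an explicit function of the single nonzero coordinate of $\pi$ that (a) satisfies \pref{assum:lipschitz} with constant $L$ (not $O(L)$ — the $3$ in $L\geq 3$ suggests one has a little slack to absorb constants), (b) is a genuine probability vector everywhere on $\mathcal{D}$, and (c) produces exactly an $\Theta(\epsilon)$ utility gap of width $\Theta(\epsilon/L)$; a triangular/tent-shaped probability ramp on arm $1$ versus arm $i^\star$, linear in $\pi_{i^\star}$ with slope $\Theta(L)$, handles this, and one checks the $\ell_1$-Lipschitz bound coordinatewise. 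And (ii) the information-theoretic accounting: I must be careful that the learner only observes the \emph{selected arm}, not the full probability vector, so the per-round KL between a bump instance and the null on an in-window query is $\Theta(\epsilon^2)$ (not $\Theta(\epsilon)$) by the usual Bernoulli-KL estimate $\mathrm{KL}(p\|p') = \Theta((p-p')^2)$ for $p,p'$ bounded away from $0,1$ — this is exactly what turns the $\Omega(1/\epsilon)$-looking bound into the correct $\Omega(1/\epsilon^2)$ and ultimately the $T^{2/3}$ rate. Everything else (choosing $w$, counting windows, the $\epsilon$-optimization, Pinsker) is routine once these two pieces are pinned down. For the write-up I would state it as: construct instances $\{H_{i,\ell}\}$ plus null $H_0$, show $\mathrm{KL}(\P^{H_{i,\ell}}_T \,\|\, \P^{H_0}_T) \le c\,\epsilon^2 \,\mathbb{E}_{H_0}[\#\{t: \text{round }t\text{ queries window }(i,\ell)\}]$, sum over $(i,\ell)$, invoke that these visit counts sum to $\le T$, conclude some $(i,\ell)$ has small expected visits hence large regret, and optimize $\epsilon$.
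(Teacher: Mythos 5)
Your overall strategy is the same as the paper's: a family of ``hidden bump'' instances indexed by an (arm, window) pair, where only queries landing in the unknown window of width $\Theta(\epsilon/L)$ reveal any signal, followed by a many-armed-bandit KL/Pinsker argument and the optimization $\epsilon=\Theta((NL)^{1/3}T^{-1/3})$. The paper instantiates this with reward vector $(1,\dots,1,0)$, a tent-shaped bonus $\mathbf{B}(x)=\frac{L-1}{4}\min(x,\epsilon-x)$ added to the incentivized arm's selection probability (baseline $\approx\frac{1}{16N}$), and then reduces to a stochastic MAB with $N\lceil\epsilon^{-1}\rceil$ arms and Gaussian noise; your parameterization (bump height $\Theta(\epsilon)$, width $\Theta(\epsilon/L)$) is an equivalent rescaling and your final arithmetic is correct.

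The genuine gap is in the step you yourself flag as item (ii), and it is not routine: the claim that the in-window per-round KL is $\Theta(\epsilon^2)$ requires the $\Theta(\epsilon)$ probability shift to occur between outcomes whose baseline probabilities are $\Theta(1)$, but your mechanism moves mass \emph{off the incentivized arm $i^\star$}. With $N$ arms, \pref{assum:lipschitz} forces $\Pr[a(\pi,j)=i^\star]\le \Pr[a(\mathbf{0},j)=i^\star]+L\pi_{i^\star}$, and $\sum_i\Pr[a(\mathbf{0},j)=i]=1$, so for most arms the incentivized arm's selection probability at the relevant incentive levels is $O(1/N+L\pi_{i^\star})$ rather than $\Theta(1)$. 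The categorical KL between the bump and null observations is then $\Theta(\epsilon^2/p_0)$ with $p_0=O(1/N)$, i.e.\ $\Theta(N\epsilon^2)$ per in-window round; redoing your balancing with this KL yields only $\Omega(L^{1/3}T^{2/3})$ --- the $N^{1/3}$ factor disappears. So as written the step ``per-round KL is $\Theta(\epsilon^2)$ by the Bernoulli estimate for $p,p'$ bounded away from $0,1$'' would fail for your construction. The fix is to route the shift between two \emph{fixed} arms that are selected with constant probability regardless of which arm is incentivized (e.g.\ a reward-$1$ arm and a reward-$0$ dummy arm, each chosen w.p.\ $\approx 1/2$ under the null, with the $\epsilon$-shift from the dummy to the reward-$1$ arm triggered by the incentive level on $i^\star$); this keeps the null utility exactly flat, makes out-of-window observations identically distributed to the null, and delivers KL $=\Theta(\epsilon^2)$. (For what it is worth, the paper's own construction also places the bump on an arm of baseline probability $\frac{1}{16N}$ and avoids this issue only by passing to a Gaussian-noise MAB whose feedback is not the feedback the principal actually receives, so your instinct to do the KL accounting on the categorical observation directly is the right one --- you just need the construction to support it.)
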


Here we provide high level ideas behind the proof. We refer the reader to \pref{app:lowerbound-smooth-single} for the detailed proof.
To prove the above theorem, we construct an instance where only a small interval of size $\epsilon$ for a specific arm yields a reward $\Delta$ higher than any other incentive-arm pair. This results in a regret of $\min(\Delta T, \Delta^{-2} \epsilon^{-1} N)$, which, with appropriately chosen parameters, establishes the desired bound.  

However, proving this bound presents two additional technical challenges, as \pref{assum:lipschitz} must hold. Specifically, the reward within the optimal interval should increase smoothly by $\Delta$ and then return to the base reward. Additionally, $\Delta$ must be bounded by $\mathcal{O}(L \epsilon^{-1})$. We define the set of adversarial types as $ J := \{(i, j) \mid i \in [N], j \in \frac{1}{2} \lfloor \epsilon^{-1} \rfloor \} $. Then, for all $ l \leq N - 1 $, we have:
    \begin{align} \label{eq:smooth-hard-instance-1}
        &\Pr[a(\pi, (i, j)) = l] = \begin{cases}
        \frac{1}{16N(1 - \pi_i)}+\mathbf{B}(\pi_i - j \epsilon) \\ \quad \text{if } l = i, \pi_i \in [j \epsilon, (j + 1) \epsilon] \\
           \frac{1}{16N(1 - \pi_i)} \quad\text{if } \pi_l \le \frac{1}{2} \\ 
           \frac{1}{8N}\quad\text{if }  \pi_l > \frac{1}{2}
        \end{cases} \\ 
        \label{eq:smooth-hard-instance-2} & \Pr[a(\pi, (i, j)) = N] = 1 - \sum_{l = 1}^{N - 1} \Pr[a(\pi, (i, j)) = l]  
    \end{align}
Where bonus term $\mathbf{B}: [0, \epsilon] \rightarrow \bbR$ is a $\frac{L - 1}{4}$-Lipschitz function with $\mathbf{B}(0) = \mathbf{B}(\epsilon) = 0$. This leads to the optimal incentive $\pi^*$, where $\pi^*_l = \begin{cases}
    j\epsilon + \arg\max_x \mathbf{B}(x) &\text{ if }  l=i \\ 0 &\text{ if } l \ne i
\end{cases} 
$, and only giving incentive in a small region containing $\pi^*$ will give information for optimal solution, and its not possible to distinguish the best region for all agent types without having $\Omega(T^{2/3})$ regret.

\subsection{Instance-dependent Algorithm for Single-Arm Incentives} 
Depending on the problem instance, uniformly covering the entire action space—while optimal in the worst case—can be inefficient. The goal is to cover near-optimal regions more densely than others. To achieve this, we leverage zooming algorithms, which are widely used in the Lipschitz bandits literature. These algorithms begin with a uniform discretization and adaptively refine the discretization in regions that are more likely to be optimal. Specifically, if the uncertainty about an optimal point being in a given region falls below a certain threshold (the zooming rule), that region is discretized more densely.

%\paragraph{Adaptive Discretization via zooming}
We use \textsc{AdversarialZooming} from \cite{adv-lipschitz} with the action set $\mathcal{D}$ and the metric $d(\pi, \pi') = L \|\pi - \pi'\|_{\infty}$. If \pref{assum:lipschitz} holds, then the reward function $U(\cdot, j_t)$ is $(2L+1)$-Lipschitz, ensuring that these inputs are valid for the algorithm. To adapt to the adversarial setting, the algorithm employs EXP3.P to sample points from active regions and zooms into a region when its confidence interval becomes smaller than a threshold, determined by its diameter, $L$, and $t$. The key instant-dependent parameter is the adversarial zooming dimension, for a parameter $\gamma > 0$ is defined as:
\begin{align*}
    \inf_{z \ge 0} \{ \text{Cover}(\calA_\varepsilon, \mathcal{B}(\|.\|_\infty, \varepsilon) ) \le \gamma \varepsilon^{-z} \quad \forall \varepsilon \},
\end{align*}
where $\calA_\varepsilon$ is the set of all incentives that achieve $\otil(\varepsilon)$-optimal reward for the principal in hindsight, and $\mathcal{B}(\|\|, \varepsilon)$ are (noncentered) norm balls of diameter $\varepsilon$. Then, using \cite{adv-lipschitz}[Theorem 3.1] for our setting, we would have the following regret bound.

\begin{corollary}
    Suppose the principal and agents operate under a smooth choice model (\pref{assum:lipschitz}), where the choice model has the Adversarial zooming dimension of $z$. Then, there exists an algorithm for this principal-agent problem whose pseudo-regret is upper bounded by
    \begin{align*}
    R_T \le \mathcal{O}(T^{(z + 1)/(z + 2)}) ((2L+1)N)^{z/(z + 2)} \log^5 T.
    \end{align*}
\end{corollary}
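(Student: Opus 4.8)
The plan is to verify that the smooth single-arm problem is an instance of adversarial Lipschitz bandits in the sense required by \textsc{AdversarialZooming} \citep{adv-lipschitz}, and then to read off the bound from their Theorem~3.1. Only three properties need checking: rewards are bounded; the per-round expected reward is Lipschitz in the working metric $d(\pi,\pi')=L\|\pi-\pi'\|_\infty$; and the action space is a metric space with finite covering numbers, so that the adversarial zooming dimension is well defined.

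Boundedness is immediate: since $v_i,\pi_i\in[0,1]$ we have $U(\pi_t,j_t)=v_{a(\pi_t,j_t)}-\pi_{t,a(\pi_t,j_t)}\in[-1,1]$, and conditioned on $j_t$ the realized reward is $f_t(\pi_t):=\E[U(\pi_t,j_t)\mid j_t]$ plus bounded noise, matching the adversarial Lipschitz-bandit model of \citet{adv-lipschitz} once $f_t$ is taken to be the (adversarially chosen) reward function of round $t$. The Lipschitz property is exactly the displayed calculation preceding this corollary: under \pref{assum:lipschitz}, $|f_t(\pi)-f_t(\pi')|\le(2L+1)\|\pi-\pi'\|_\infty$ for all $\pi,\pi'$ and all $t$, so each $f_t$ is $(2L+1)$-Lipschitz with respect to $\|\cdot\|_\infty$, hence $O(1)$-Lipschitz with respect to $d$ because $L\ge 1$. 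Finally, $\calD$ is the union of the $N$ unit segments $\{xe_i:x\in[0,1]\}$, $i\in[N]$, joined at $\mathbf 0$, so for every scale $\varepsilon$ it is covered by $O(N/\varepsilon)$ balls of $\|\cdot\|_\infty$-diameter $\varepsilon$; thus $\mathrm{Cover}(\calA_\varepsilon,\calB(\|\cdot\|_\infty,\varepsilon))\le O(N)\,\varepsilon^{-1}$ for every near-optimal set $\calA_\varepsilon$, so the adversarial zooming dimension $z$ is well defined and at most $1$.

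Feeding a reward range of $O(1)$, a Lipschitz constant of $2L+1$, and this ambient cover of size $O(N)$ into \citet[Theorem~3.1]{adv-lipschitz}, whose guarantee for an adversarial Lipschitz bandit with adversarial zooming dimension $z$ has the shape $\otil\big((\text{zooming constant})^{1/(z+2)}\,T^{(z+1)/(z+2)}\big)$, and accounting for the fact that rescaling the metric by the Lipschitz constant inflates the zooming constant by $(2L+1)^z$ on top of the $O(N)$ ambient cover, one obtains $R_T\le\order\big(T^{(z+1)/(z+2)}\big)\,((2L+1)N)^{z/(z+2)}\log^5 T$, where the $\log^5 T$ absorbs the polylogarithmic factors in their confidence radii and in the \Expp subroutine used to sample among active balls. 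As a sanity check, the worst case $z=1$ (no gap structure, $\calA_\varepsilon=\calD$) reproduces $((2L+1)N)^{1/3}T^{2/3}$, matching the bound obtained earlier by uniform discretization.

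The main obstacle is the constant bookkeeping inside \citet[Theorem~3.1]{adv-lipschitz}: one has to pin down exactly which quantities (the Lipschitz constant, the diameter, and the constant $\gamma$ in $\mathrm{Cover}(\calA_\varepsilon,\calB(\|\cdot\|_\infty,\varepsilon))\le\gamma\varepsilon^{-z}$) enter the leading term, and verify that the rescaling $d=L\|\cdot\|_\infty$ and the $O(N)$ ambient cover combine through the exponent $1/(z+2)$ into precisely the factor $((2L+1)N)^{z/(z+2)}$ rather than, say, $N^{1/(z+2)}(2L+1)^{z/(z+2)}$, and that the accumulated logarithmic overhead totals $\log^5 T$. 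A minor additional point is to confirm that \textsc{AdversarialZooming} imposes no structural requirement on the action set beyond being a metric space with controlled covering numbers, so that the non-convex ``star of segments'' geometry of $\calD$ is harmless.
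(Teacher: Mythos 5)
Your proposal is correct and follows essentially the same route as the paper: verify that under Assumption~\ref{assum:lipschitz} the expected per-round reward is $(2L+1)$-Lipschitz in $\|\cdot\|_\infty$ (the displayed calculation in the text), observe that $\calD$ has the required covering structure so the adversarial zooming dimension is well defined, and then invoke Theorem~3.1 of the adversarial zooming paper to read off the bound. Your write-up is in fact somewhat more explicit than the paper's about the boundedness, the $O(N/\varepsilon)$ ambient cover, and how the metric rescaling produces the $((2L+1)N)^{z/(z+2)}$ factor, but the underlying argument is identical.
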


 For single-incentive setting, the zooming dimension is bounded by 1, since the entire action space can be covered with $\mathcal{O}(N \epsilon^{-1})$ hypercubes, as shown in the fixed discretization algorithm. However, for some smooth choice models, the regret bound is improved.
For instance, if the reward function induced by an agent’s choice model is strictly concave, then $z = \frac{1}{2}$, leading to an expected regret of $T^{3/5}$ (see \cite{adv-lipschitz}, Section 6).

\subsection{Algorithm with General Incentives}
For the general incentive case, recall that $\mathcal{D} = [0,1]^N$. We partition $\mathcal{D}$ into $\varepsilon^{-N}$ equally sized hypercubes and define $\Pi$ as the set of center points of these hypercubes. Similar to the single-arm incentive case, we run Tsallis-INF by treating each point in $\Pi$ as an arm. Setting $\varepsilon = (2L+1)^{-2/(N+2)} T^{-1/(N+2)}$, we incur the following regret: 
\begin{align*}
    R_T&\le \sum_{i = 1}^T\mathbb{E}[U(\widehat{\pi}, j_t) - U(\pi_t, j_t)]\\
    &\quad+ \sum_{i = 1}^T\mathbb{E}[U(\pi^*, j_t) - U(\widehat{\pi}, j_t)]\\
    &\le {\mathcal{O}} ( \sqrt{\varepsilon^{-N} T } +T\cdot(2L+1)\cdot\varepsilon)\\
    &\leq  \order\left((2L+1)^{N/(N + 2)} T^{(N + 1)/(N + 2)}\right)
\end{align*}
% $R_T\le \sum_{i = 1}^T\mathbb{E}[U(\widehat{\pi}, j_t) - U(\pi_t, j_t)] + \sum_{i = 1}^T\mathbb{E}[U(\pi^*, j_t) - U(\widehat{\pi}, j_t)]\le {\mathcal{O}} ( \sqrt{\varepsilon^{-N} T } +T\cdot(2L+1)\cdot\varepsilon)\leq  \order\left((2L+1)^{N/(N + 2)} T^{(N + 1)/(N + 2)}\right)$. 

% we have 
% \begin{align*}
%     &\quad R_T \\
%     &\le R_{\text{EXP3}}(\bar{\calD}) + \sum_{i = 1}^T U(\pi^*, j_t) - U(\bar{\pi}, j_t) \\ 
%     &\le \tilde{\mathcal{O}} ( \sqrt{ \epsilon^{-N} T } ) + T L \epsilon \le \tilde{\mathcal{O}} \big(L^{N/(N + 2)} T^{(N + 1)/(N + 2)} \big)
% \end{align*}
Similar to the single-incentive setting, using \textsc{AdversarialZooming} will result in a regret of  $\otil((2L+1)^{z/(z + 2)} T^{(z + 1)/(z + 2)})$, where the zooming dimension $z$ is upper bounded by $N$, as $\epsilon^{-N}$ hypercubes cover all of decision space.

\iffalse
Similarly, Using \textsc{AdversarialZooming} will result to regret of 
\begin{align*}
    R_T \le \mathcal{O}()
\end{align*}
\fi
% \begin{algorithm}[t]
% \DontPrintSemicolon
% \caption{Adversarial Zooming for Single-arm Incentive}

% \label{alg:smooth_single_arm}
% \end{algorithm}

\section{Conclusion}

In this paper, we initiate the study of the repeated principal-agent problem with adversarial agent arrivals. For the greedy response agent and single-arm incentive case, we establish a negative result, proving that any algorithm without prior knowledge of agents' behaviors can be forced to incur linear regret. However, when the agents' behaviors are known, we design an algorithm based on a novel reduction to adversarial linear bandits, achieving a regret bound that nearly matches the lower bound up to a logarithmic factor.
Moreover, when the principal is allowed to offer general incentives, we propose an algorithm that integrates our reduction technique with a novel discretization approach, leading to a sublinear regret bound. Finally, for the smooth-response agent setting, we develop an algorithm whose regret bound aligns with our lower bound up to logarithmic factors.

Our work opens several interesting directions for future research. A natural extension, inspired by \citep{scheid2024incentivized}, is to consider a setting where the principal observes only noisy rewards, adding an extra layer of uncertainty and requiring reward distribution learning. Another direction is to incorporate purchase quantity into the model, as in \citep{chen2024learning}. While our current framework assumes agents purchase a single item per round, real-world scenarios often involve varying purchase quantities.

\section*{ACKNOWLEDGEMENTS}
This work was supported in part by NSF TRIPODS CCF Award \#2023166, Microsoft Grant for Customer Experience Innovation, a Northrop Grumman University Research Award, ONR YIP award \# N00014-20-1-2571, and NSF award \#1844729.
\section*{Impact Statement}
This paper presents work whose goal is to advance the field of 
Machine Learning. There are many potential societal consequences 
of our work, none which we feel must be specifically highlighted here.

\bibliography{refs}
\bibliographystyle{icml2025}

%%%%%%%%%%%%%%%%%%%%%%%%%%%%%%%%%%%%%%%%%%%%%%%%%%%%%%%%%%%%%%%%%%%%%%%%%%%%%%%
%%%%%%%%%%%%%%%%%%%%%%%%%%%%%%%%%%%%%%%%%%%%%%%%%%%%%%%%%%%%%%%%%%%%%%%%%%%%%%%
% APPENDIX
%%%%%%%%%%%%%%%%%%%%%%%%%%%%%%%%%%%%%%%%%%%%%%%%%%%%%%%%%%%%%%%%%%%%%%%%%%%%%%%
%%%%%%%%%%%%%%%%%%%%%%%%%%%%%%%%%%%%%%%%%%%%%%%%%%%%%%%%%%%%%%%%%%%%%%%%%%%%%%%
\newpage
\appendix
\onecolumn
% \section{You \emph{can} have an appendix here.}

% You can have as much text here as you want. The main body must be at most $8$ pages long.
% For the final version, one more page can be added.
% If you want, you can use an appendix like this one.  

% The $\mathtt{\backslash onecolumn}$ command above can be kept in place if you prefer a one-column appendix, or can be removed if you prefer a two-column appendix.  Apart from this possible change, the style (font size, spacing, margins, page numbering, etc.) should be kept the same as the main body.
%%%%%%%%%%%%%%%%%%%%%%%%%%%%%%%%%%%%%%%%%%%%%%%%%%%%%%%%%%%%%%%%%%%%%%%%%%%%%%%
%%%%%%%%%%%%%%%%%%%%%%%%%%%%%%%%%%%%%%%%%%%%%%%%%%%%%%%%%%%%%%%%%%%%%%%%%%%%%%%

%!TEX root=main.tex

\section{Technical Lemmas}

\begin{lemma}[$\KL$-divergence of Bernoulli (\citet{maiti2023logarithmic})]\label{kl:bernoulli}
    Consider $\varepsilon>0$, $\varepsilon< c_0\leq \frac{1}{2}$ such that  $\frac{\varepsilon}{c_0}\leq\frac{1}{2}$. Let $P$ and $Q$ be bernoulli distributions with means $c_0-\varepsilon$ and $c_0+\varepsilon$. Then we have the following:
    \begin{equation*}
        \KL(P,Q)\leq \frac{16\varepsilon^2}{c_0} \quad \text{and}\quad \KL(Q,P)\leq \frac{16\varepsilon^2}{c_0} 
    \end{equation*}
\end{lemma}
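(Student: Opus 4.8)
\textbf{Proof plan for \pref{kl:bernoulli}.}

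The plan is to bound the KL divergence between two Bernoulli distributions directly using a pointwise logarithm estimate, rather than invoking any specialized inequality. Write $p = c_0 - \varepsilon$ and $q = c_0 + \varepsilon$, so that $\KL(P,Q) = p\log\frac{p}{q} + (1-p)\log\frac{1-p}{1-q}$. The idea is to control each of the two logarithmic terms using the elementary bound $\log x \le x - 1$, which gives $\log\frac{p}{q} \le \frac{p-q}{q} = \frac{-2\varepsilon}{q}$ and $\log\frac{1-p}{1-q} \le \frac{q-p}{1-q} = \frac{2\varepsilon}{1-q}$. Substituting,
\begin{align*}
\KL(P,Q) &\le p\cdot\frac{-2\varepsilon}{q} + (1-p)\cdot\frac{2\varepsilon}{1-q}
= 2\varepsilon\left(\frac{1-p}{1-q} - \frac{p}{q}\right)
= 2\varepsilon\cdot\frac{q(1-p) - p(1-q)}{q(1-q)}
= 2\varepsilon\cdot\frac{q-p}{q(1-q)}
= \frac{4\varepsilon^2}{q(1-q)}.
\end{align*}
It then remains to show $\frac{4\varepsilon^2}{q(1-q)} \le \frac{16\varepsilon^2}{c_0}$, i.e. $q(1-q) \ge \frac{c_0}{4}$. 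Since $q = c_0 + \varepsilon$ and the hypotheses give $\varepsilon < c_0 \le \tfrac12$ with $\tfrac{\varepsilon}{c_0} \le \tfrac12$, we have $q \le c_0 + \tfrac{c_0}{2} = \tfrac{3c_0}{2} \le \tfrac34 < 1$, so $1 - q \ge \tfrac14$; and $q \ge c_0$, hence $q(1-q) \ge \tfrac{c_0}{4}$, which is exactly what is needed.

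For the reverse direction $\KL(Q,P) \le \frac{16\varepsilon^2}{c_0}$, I would run the symmetric computation: $\KL(Q,P) = q\log\frac{q}{p} + (1-q)\log\frac{1-q}{1-p}$, and again apply $\log x \le x-1$ to each term to get $\KL(Q,P) \le q\cdot\frac{q-p}{p} + (1-q)\cdot\frac{p-q}{1-p} = 2\varepsilon\left(\frac{q}{p} - \frac{1-q}{1-p}\right) = 2\varepsilon\cdot\frac{q-p}{p(1-p)} = \frac{4\varepsilon^2}{p(1-p)}$. Here one needs $p(1-p) \ge \frac{c_0}{4}$. We have $p = c_0 - \varepsilon \ge c_0 - \tfrac{c_0}{2} = \tfrac{c_0}{2}$ from $\tfrac{\varepsilon}{c_0}\le\tfrac12$, and $1-p \ge 1 - c_0 \ge \tfrac12$; hence $p(1-p) \ge \tfrac{c_0}{2}\cdot\tfrac12 = \tfrac{c_0}{4}$, completing the bound. (In fact this direction yields the slightly stronger constant $8$, but $16$ suffices.)

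The computation is entirely routine; the only mild subtlety — the step I would be most careful about — is tracking the constraints $\varepsilon < c_0 \le \tfrac12$ and $\tfrac{\varepsilon}{c_0}\le\tfrac12$ to make sure both $p$ and $q$ stay bounded away from $0$ and $1$ by the right amounts, since these are exactly what convert the sharp denominators $q(1-q)$ and $p(1-p)$ into the clean expression $c_0$. No step here is a genuine obstacle; the lemma is a packaging of the standard $\chi^2$-type upper bound on KL divergence specialized to the regime used in the lower-bound arguments.
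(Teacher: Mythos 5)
Your proof is correct. The paper does not prove this lemma itself --- it imports it verbatim from \citet{maiti2023logarithmic} --- so there is no in-paper argument to compare against; your derivation via $\log x \le x-1$, which amounts to the standard $\chi^2$ upper bound $\KL(P,Q)\le \frac{(p-q)^2}{q(1-q)}$ followed by the checks $q(1-q)\ge c_0/4$ and $p(1-p)\ge c_0/4$ under the stated hypotheses, is a clean and complete self-contained proof. The only slip is the parenthetical claim that the second direction gives constant $8$: with $p\ge c_0/2$ and $1-p\ge 1/2$ you get $p(1-p)\ge c_0/4$ and hence the same constant $16$, but as you note this is immaterial to the stated bound.
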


\begin{lemma}[Chain Rule (\citet{maiti2025efficient})]\label{chain-rule}
        Let $f(x_1,x_2,\ldots,x_n)$ and $g(x_1,x_2,\ldots,x_n)$ be two joint PMFs for a tuple of random variables $(X_i)_{i\in[n]}$. Let the sample space be $\Omega= \{0,1\}^{n}$. Then we have the following:
        \begin{equation*}
            \KL(f,g)=\sum\limits_{\omega\in \Omega}f(\omega)\left(\KL(f(X_1),g(X_1))+\sum_{i=2}^n \KL(f(X_i|X_{-i}=\omega_{-i}),g(X_i|X_{-i}=\omega_{-i}))\right)\;
        \end{equation*}
        where $X_{-i}=(X_1,\ldots,X_{i-1})$, $\omega_{-i}=(\omega_1,\ldots,\omega_{i-1})$.
\end{lemma}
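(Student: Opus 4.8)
The plan is to prove the identity directly from the definition of KL divergence together with the probabilistic chain rule for factorizing a joint PMF. I would begin with $\KL(f,g) = \sum_{\omega \in \Omega} f(\omega)\log\frac{f(\omega)}{g(\omega)}$ and factorize both joint PMFs telescopically: for every $\omega = (\omega_1,\dots,\omega_n)$ in the support of $f$, write
\[
f(\omega) = f(X_1 = \omega_1)\prod_{i=2}^n f\bigl(X_i = \omega_i \mid X_{-i} = \omega_{-i}\bigr),
\]
and identically for $g$. This is nothing but the definition of conditional probability applied repeatedly, and it is valid on the support of $f$, which is the only set of $\omega$ contributing to the outer sum.

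Taking logarithms converts the product into a sum, giving $\log\frac{f(\omega)}{g(\omega)} = \log\frac{f(X_1=\omega_1)}{g(X_1=\omega_1)} + \sum_{i=2}^n \log\frac{f(X_i=\omega_i\mid X_{-i}=\omega_{-i})}{g(X_i=\omega_i\mid X_{-i}=\omega_{-i})}$. Substituting this back and distributing the weight $f(\omega)$ across the $n$ log-terms, I obtain $\KL(f,g)$ as a sum of $n$ pieces, each of the form $\sum_{\omega} f(\omega)\,\log\frac{f(X_i\mid X_{-i})}{g(X_i\mid X_{-i})}$ (with the convention that $X_{-1}$ is empty, so the $i=1$ piece uses the marginal on $X_1$). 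The interchange of the finite summations over $i$ and over $\omega$ is trivially justified since $\Omega = \{0,1\}^n$ is finite. The key observation is that the $i$-th summand depends on $\omega$ only through the prefix $\omega_{-i}=(\omega_1,\dots,\omega_{i-1})$ and the coordinate $\omega_i$; this is precisely what lets me keep the compact outer weight $\sum_\omega f(\omega)(\cdot)$ exactly as written in the statement, with no marginalization required. For the $i=1$ piece the summand is constant in $\omega$, so $\sum_\omega f(\omega)(\cdot)$ reproduces $\KL(f(X_1),g(X_1))$; collecting all pieces inside a single $\sum_\omega f(\omega)(\cdot)$ yields the claimed equality. (If one instead wants the more familiar chain rule, summing out $\omega_i,\dots,\omega_n$ collapses $f(\omega)$ to the prefix marginal $f(X_{-i}=\omega_{-i})$ and the inner sum over $\omega_i$ reconstructs the conditional divergence $\KL(f(X_i\mid X_{-i}=\omega_{-i}),\,g(X_i\mid X_{-i}=\omega_{-i}))$; this step is not needed for the statement as worded.)

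I expect the main obstacle to be bookkeeping rather than any conceptual difficulty. I must handle the $i=1$ boundary case cleanly, ensure that every conditional is evaluated only on prefixes $\omega_{-i}$ of positive $f$-probability (prefixes of zero $f$-probability carry zero weight and are harmless, since the outer weight is $f(\omega)$), and invoke the standard KL convention for absolute continuity so that any $\omega$ with $g(\omega)=0<f(\omega)$ makes both sides $+\infty$ simultaneously. With these conventions in place, the proof reduces to the algebraic manipulation above.
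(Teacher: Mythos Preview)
The paper does not prove this lemma; it is stated as a cited technical result from \citet{maiti2025efficient} and used as a black box in the lower-bound argument. Your approach is the standard derivation of the chain rule for KL divergence and is correct in substance.

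One small point of exposition: your claim that ``no marginalization is required'' is not quite accurate. After factorizing and taking logs you arrive at
\[
\KL(f,g)=\sum_{\omega}f(\omega)\sum_{i=1}^{n}\log\frac{f(X_i=\omega_i\mid X_{-i}=\omega_{-i})}{g(X_i=\omega_i\mid X_{-i}=\omega_{-i})},
\]
where the $i$-th inner summand depends on $\omega_i$ as well as on $\omega_{-i}$. The target expression instead has $\KL\bigl(f(X_i\mid X_{-i}=\omega_{-i}),g(X_i\mid X_{-i}=\omega_{-i})\bigr)$ in that slot, which depends only on $\omega_{-i}$. These two inner quantities are not equal pointwise in $\omega$; they only agree after you sum out the trailing coordinates $\omega_i,\ldots,\omega_n$ on each side against the weight $f(\omega)$ (both collapse to $\sum_{\omega_{-i}}f(\omega_{-i})\,\KL(f(X_i\mid\omega_{-i}),g(X_i\mid\omega_{-i}))$). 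That step is exactly the marginalization you sketch in your parenthetical remark, so you already have the missing piece---just promote it from an aside to part of the main argument.
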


\section{$\Omega(T)$ Lower Bound for unknown behavior of the agents}\label{app:linear-regret-unknown}
Let us consider the setting where the principal doesn't know the behavior of the agents. In this case, we show that the principal necessarily incurs a regret of $\Omega(T)$ under the single arm incentive. Let us choose $\Delta\in [0.7,0.71]$ uniformly at random. Let there be two agent types. Let $u^j$ be the reward vector associated with each agent $j\in\{1,2\}$. We define   $u^1_1=0.2, u^1_2=0$ and $u^1_3=0.2+\Delta$. We also define $u^2_1=0.2, u^2_2=0.2+\Delta$ and $u^2_3=0$. For an incentive of $\Delta$ to arm $1$, agent $1$ breaks the tie in favor of arm $1$ and agent $2$ breaks the tie in favor of arm $2$. The reward vector $v$ for the principal is defined as $v_1=1$, $v_2=0.5$ and $v_3=0$. In each round $t$, we choose agent $1$ with probability $0.4$ and agent $2$ with probability $0.6$. Let this problem instance be denoted as $I_{\Delta}$. It can be shown that the optimal incentive is to provide an incentive of $\Delta$ to the first arm. Anything more or less will lead to a constant expected regret in each round. This rules out the possibility of discretizing the incentive space. With more refined analysis one can indeed show a linear regret. 

We now formally show that any algorithm incurs a regret of $\Omega(T)$. Consider a problem instance $I_{\Delta}$ where $\Delta\in[0.7,0.71]$. Observe that the expected reward for providing an incentive of $\Delta$ to the first arm is $0.6\cdot 0.5+0.4\cdot(1-\Delta)\geq 0.416$. Next, observe that the expected reward for providing an incentive less than $\Delta$ to the first arm is at most $0.6\cdot 0.5+0.4\cdot0=0.3$. This includes providing incentive to other two arms. Finally, observe that the expected reward for providing an incentive more than $\Delta$ to the first arm is at most $1-\Delta\leq 0.3$. Hence, providing an incentive of $\Delta$ to the first arm is the best fixed incentive.

Next let us fix a deterministic algorithm $\texttt{Alg}$. Let $\calI=\bigcup_{\Delta\in[0.7,0.71]}I_{\Delta}$. We aim to show that $\mathbb{E}_{I\sim Unif(\calI)}[R(I,T)]=\Omega(T)$ where $R(I,T)$ is the expected regret of $\texttt{Alg}$ on the instance $I$. W.l.o.g let us assume that in each round $t$, $\texttt{Alg}$ chooses an incentive vector $\pi_t$ such that $(\pi_t)_1\in[0.7,0.71]$. Note that choosing any other type of incentive vector would lead to a constant expected regret in that round.

Now let us analyze the behavior of $\texttt{Alg}$. Fix a sequence of agents $j_1,j_2,\ldots,j_T$. The behavior of $\texttt{Alg}$ can be abstractly stated as follows. In each round $t$, it chooses an incentive vector $\pi_t$ such that $(\pi_t)_1\in[0.7,0.71]$. An arm $i_t$ gets chosen by agent $j_t$. This can be considered as assigning the arm $i_t$ to the vector $\pi_t$. Then based on the agent numbers $j_1,\ldots,j_t$ (not their preference vectors) and arms $i_1,\ldots,i_t$, $\texttt{Alg}$ chooses $\pi_{t+1}$. We say assignment of an arm $i_t$ to an incentive vector $\pi_t$ is consistent if one of the following holds:
\begin{itemize}
    \item If $i_t=1$, then for all $s<t$ such that $\pi_{s}\leq \pi_{t}$ we have $i_s=1$.
    \item If $i_t=2$, then $j_t=2$ and for all $s<t$ such that $\pi_{s}\geq \pi_{t}$ we have $i_s>1$.
    \item If $i_t=2$, then $j_t=3$ and for all $s<t$ such that $\pi_{s}\geq \pi_{t}$ we have $i_s>1$.
\end{itemize}

Now consider all possible ways to consistently label the incentive vectors in all rounds. There are only finite number of possibilities for a fixed sequence of agents. Moreover, the number of possible sequence of agents is also finite. Hence, only a finite number of incentives in the range $[0.7,0.71]$ is provided by $\texttt{Alg}$ to the arm $1$. Let this set of incentives be $\calS$.

Now consider an instance $I_{\Delta}$ such that $\Delta\in[0.7,0.71]\setminus \calS$. Let $\calD$ be a distribution supported on $\{1,2\}$ such that probability of choosing $1$ is $0.4$ and the probability of choosing $2$ is $0.6$. Now given a filtration $\calF_{t-1}$, $\pi_t$ gets fixed. Due to the definition of $\calS$, we $(\pi_t)_1\neq \Delta$. Therefore, we have $\mathbb{E}_{j_t\sim\calD}[U(\pi_t,j_t)|\calF_{t-1}]\leq 0.3$. Hence, we have $\mathbb{E}[U(\pi_t,j_t)]=\mathbb{E}[\mathbb{E}[U(\pi_t,j_t)|\calF_{t-1}]]\leq0.3$. Hence we have $\sum_{t=1}^T\mathbb{E}[U(\pi_t,j_t)]\leq 0.3T$. On the other hand, for the optimal incentive vector $\pi^*$, we have $\sum_{t=1}^T\mathbb{E}[U(\pi^*,j_t)]\geq 0.416T$. Hence, we have $R(I_{\Delta},T)$. As $\calS$ is finite, we have $\mathbb{E}_{I\sim Unif(\calI)}[R(I,T)]=\Omega(T)$. Due to Yao's lemma, one can show that any randomized algorithm incurs a worst case expected regret of $\Omega(T)$.

\section{Lower bound for Greedy model with Single-arm incentive}\label{app:lower-greedy-single-known}
In this section, we present the proof of \Cref{thm:lower_bound_KsqrtT}. We begin by establishing a lower bound of $\sqrt{KT}$ for all $N \geq 3$ in \Cref{appendix:lower-sqrtKT}. Next, in \Cref{appendix:lower-bound-N>=K}, we derive a lower bound of $\sqrt{KT\log(N)/\log(K)}$ for all $N \in \{K, K+1, \dots, 2^{(K-2)/48} \}$. This result further implies a lower bound of $\Omega(K\sqrt{T/\log(K)})$ for all $N\geq 2^{(K-2)/48}$. By combining these cases, we obtain a general lower bound of $\Omega(\min\{\sqrt{K T \log(N)/\log(K)}, K \sqrt{T/\log(K)}\})$ for all $N, K \geq 3$, thereby proving the theorem.

\subsection{$\Omega \big(\sqrt{KT} \big)$ Minimax Lower Bound}\label{appendix:lower-sqrtKT}

For policy $A$ and instance $I$, we define the pseudo-regret 
\[
R^{A,I}_T=\sup_{\pi\in [0,1]^N}\E_{A,I} \sbr{ \sum_{t=1}^T \left( U(\pi, j_t) - U(\pi_t, j_t) \right) }.
\]

In the following, we give a lower bound for the problem with the single-arm incentive. 

\begin{theorem}[minimax lower bound]
$\forall K \geq 3, N \geq 3, T >  \max \{4(K-2)^3,10(K-2)\}$, policy $A$, $\exists I$, $R^{A,I}_T =\Omega (\sqrt{KT})$. 
\end{theorem}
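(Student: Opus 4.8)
The plan is to prove this $\Omega(\sqrt{KT})$ minimax lower bound by reducing from a $(K-2)$-armed bandit problem, following the high-level sketch given just before Theorem~\ref{thm:lower_bound_KsqrtT} in the body. First I would fix a single reward vector $v$ and a single family of preference vectors $\mu^1,\dots,\mu^K$ (independent of the hard instance index), and exhibit a finite set $\Pi = \{\pi^{(1)},\dots,\pi^{(K-1)}\}$ of candidate incentive vectors with the domination property: for every $\pi\in\mathcal D$ there is $\pi^{(i)}\in\Pi$ with $U(\pi^{(i)},j)\ge U(\pi,j)$ for all $j\in[K]$. The natural construction uses $N=3$ arms (which suffices for all $N\ge 3$ by padding with useless arms), with $v_1$ large, $v_2$ moderate, $v_3=0$, and preference vectors arranged so that for agent types $3,\dots,K$ there is exactly one cheap incentive on arm $1$ that flips that type to the high-reward arm while leaving the others untouched; two ``special'' agent types ($1$ and $2$) anchor the baseline so the utility is affine in a natural $(K-2)$-dimensional parametrization. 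Once this is set up, any algorithm may be assumed WLOG to play only from $\Pi$ in every round.

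Next I would define, for each coordinate $i\in[K-2]$ and a small parameter $\varepsilon>0$ to be chosen, a distribution $p^{(i,\varepsilon)}$ over agent types $[K]$ such that $\mathbb E_{j\sim p^{(i,\varepsilon)}}[U(\pi^{(i)},j)] = r+\varepsilon$ while $\mathbb E_{j\sim p^{(i,\varepsilon)}}[U(\pi^{(i')},j)] = r$ for all $i'\ne i$ (with $i'$ ranging over $[K-2]$, and the remaining incentive vector(s) in $\Pi$ giving something no larger), where $r\in[0,1/2]$ is a fixed constant. Also include a ``null'' distribution $p^{(0)}$ under which all of $\pi^{(1)},\dots,\pi^{(K-1)}$ give exactly $r$. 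The key point — and the place the argument differs from the textbook MAB lower bound — is that the $p^{(i,\varepsilon)}$ must all be genuine probability distributions over the \emph{same} $K$ agent types for the \emph{same} fixed preference vectors; I would verify feasibility by choosing $r$ bounded away from $0$ and $1/2$ and $\varepsilon = c\sqrt{K/T}$ small enough (this is where the constraints $T > \max\{4(K-2)^3, 10(K-2)\}$ and $K\ge 3$ enter, guaranteeing $\varepsilon/c_0 \le 1/2$ so Lemma~\ref{kl:bernoulli} applies and the perturbations keep the distributions valid).

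Then I would run the standard change-of-measure / divergence-decomposition argument. Under $p^{(0)}$, let $N_i$ be the expected number of rounds the algorithm plays $\pi^{(i)}$; since $\sum_i N_i \le T$ there is some $i^\star$ with $N_{i^\star} \le T/(K-2)$. Compare the instance with law $p^{(0)}$ in every round against the instance with law $p^{(i^\star,\varepsilon)}$ in every round. The per-round KL divergence between the two observation distributions is nonzero only on the rounds where the algorithm's play distinguishes them; using Lemma~\ref{kl:bernoulli} it is $O(\varepsilon^2/r)$ on those rounds, and only rounds playing $\pi^{(i^\star)}$ contribute, so by the chain rule (Lemma~\ref{chain-rule}) or the standard divergence decomposition the total KL is $O(\varepsilon^2 N_{i^\star}/r) = O(\varepsilon^2 T/((K-2)r))$. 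With $\varepsilon = c\sqrt{(K-2)/T}$ this is a small constant, so by Pinsker the algorithm cannot tell the two instances apart with constant probability. Hence on one of the two instances it plays a suboptimal incentive (regret gap $\Omega(\varepsilon)$ per such round) on a constant fraction of the $T$ rounds, yielding $R_T^{A,I} = \Omega(\varepsilon T) = \Omega(\sqrt{(K-2)T}) = \Omega(\sqrt{KT})$ for $K\ge 3$.

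The main obstacle I anticipate is the first step: constructing the fixed reward and preference vectors together with the set $\Pi$ and the distribution family $\{p^{(i,\varepsilon)}\}$ so that (a) the domination property holds exactly, so that restricting to $\Pi$ is truly WLOG, (b) the map $p\mapsto (\mathbb E_{j\sim p}[U(\pi^{(i)},j)])_{i}$ is affine and lets us hit the target value profiles $(r,\dots,r,r+\varepsilon,r,\dots,r)$, and (c) every $p^{(i,\varepsilon)}$ remains a valid distribution over exactly $K$ types with the preference vectors frozen. The two ``special'' agent types carrying the baseline, and a careful choice of the $\Delta$-gaps in the preference vectors, are what make (b) and (c) simultaneously satisfiable; checking the arithmetic of these constraints against the stated range of $T$ is the delicate part, whereas the divergence computation afterwards is routine.
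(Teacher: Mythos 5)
Your proposal follows essentially the same route as the paper's proof: fix a single reward vector and preference vectors, restrict WLOG to a set $\Pi$ of $K-1$ candidate incentives (the paper uses anchor agent $1$ with $p_1=\tfrac12$ and agent $K$ preferring arm $3$ to play the role of your two ``special'' types, with the $\beta_i$ chosen so all non-baseline incentives have expected utility exactly $\tfrac13$), then perturb the agent-arrival distribution in the least-played coordinate and run the standard divergence-decomposition/Pinsker argument with $\varepsilon\asymp\sqrt{(K-2)/T}$. The only cosmetic differences are that the paper's baseline instance gives the zero incentive an $\varepsilon/6$ advantage rather than using a fully flat null distribution, and it bounds the relevant KL by a direct series expansion rather than via Lemma~\ref{kl:bernoulli}; neither changes the substance.
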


\begin{proof}
To prove the minimax regret lower bound for adversarial selection of agent type, it suffices to prove a lower bound for the stochastic selection of agent type.

Consider any fixed $N,K,T \in \mathbb{N}$ such that $K \geq 3$, $N \geq 3$, and $T >  \max \{4(K-2)^3,10(K-2)\}$.
Fix $\epsilon=\sqrt{\frac{K-2}{10T}}$.
As we assume $T>10(K-2)$, we have $\epsilon<1/10$.
To avoid clutter, we define
\[
\forall  i \in \{2,\ldots,K-1\}:\quad   \beta_i=  \frac{1}{3\rbr{\frac{5}{6}-\frac{i-2}{3(K-2)} }} -\frac{1}{3} .
\]

One can easily see that $\beta_i \in [\frac{1}{15},\frac{1}{3})$ for all $i \in \{2,\ldots,K-1\}$.

We set $v \in [0,1]^N$ as $v_1=\frac{2}{3}+\frac{\epsilon}{3}$, $v_2=1$, and $v_j=0$ for all $j \geq 3$.
Each agent type $i \in [K]$ has a reward vector $\mu^i \in [0,1]^N$.
We set $\mu^1 =(\frac{1}{3},0,\ldots,0)$ and for the $K$-th type agent, we set $\mu^K_3=1$, and $\mu^K_j=0$ for all $j \in [N]\setminus\{3\}$.
For every agent $i \in \{2,\ldots,K-1\}$, we set $\mu^i_{2}=\frac{1}{3}$, $\mu^i_3=1-\beta_i$, and $\mu^i_j=0$ for all $j \in [N]\setminus\{2,3\}$.
Suppose that the agent consistently breaks the tie by choosing the arm with the smallest index, and at each round the agent type is independently sampled from a distribution, unknown to the principal.

The following gives an intuitive setting.
\begin{align*}
    v &= \rbr{ \frac{2}{3}+\frac{\epsilon}{3},1,0,0,\ldots,0 }\\
    \mu_1 &= \rbr{ \frac{1}{3},0,0,0,\ldots,0 } \\
     \mu_2 &= \rbr{ 0,\frac{1}{3},1-\beta_2,0,\ldots,0 }\\
     \mu_3 &= \rbr{ 0,\frac{1}{3},1-\beta_3,0,\ldots,0 }\\
     &\vdots\\
     \mu_{K-1} &= \rbr{ 0,\frac{1}{3},1-\beta_{K-1},0,\ldots,0 }\\
     \mu_{K} &= \rbr{ 0,0,1,0,\ldots,0 }.
\end{align*}

We consider a class of instances, denoted by $\calI$ with fixed $v,\{\mu_i\}_{i \in [K]}$ and the breaking rule given above, but allow all possible probability distributions over the agent type with $p_1=\frac{1}{2}$, i.e., the probability of selecting the first type of agent is equal to $\frac{1}{2}$.
Let $\Pi=\{ \pi_i\}_{i=1}^{K-1}$ where $\pi_1=(0,\ldots,0)$ and for each $i \in \{2,\ldots,K-1\}$, $\pi_{i}=(0,\frac{2}{3}-\beta_{i},0,\ldots,0)$, i.e., the second coordinate takes the value of $\frac{2}{3}-\beta_{i}$, which is the minimum cost to incentivize $i$-th type of agent to play arm $2$, and all others are zero.
Then, we consider a policy set $\calA$, which maps from history to $\Pi$, i.e., at each round, the principal only offers an incentive among $\Pi$. We only need to lower bound $\inf_{A \in \calA} \sup_{I \in \calI} R^{A,I}_T$ since
\[
\inf_{A } \sup_{I } R^{A,I}_T  \geq \inf_{A } \sup_{I \in\calI  } R^{A,I}_T  = \inf_{A \in \calA}\sup_{I \in \calI }  R^{A,I}_T  ,
\]
where the equality holds because for all $I \in \calI$ and all $A \notin \calA$, if policy $A$ offers a single-value incentive $\pi \notin \Pi$ at a certain round, then one can always alternatively propose an incentive $\pi_i = (\pi_{i,1},\ldots,\pi_{i,K}) \in \Pi$ (for some $i \in [K-1]$) incurs an instantaneous regret no worse than $\pi$.

To prove the above claim, it suffices to consider two types single-arm incentive which has positive value on either arm $1$ or arm $2$. For any single-arm incentive $\pi \notin \Pi$ which has positive value on arm $1$, if the value is strictly smaller than $1-\beta_{K-1}$, then it functions the same as $(0,\ldots,0)$, in the sense that their expected utilities are the same. For any other value larger than or equal to $1-\beta_{K-1}$, incentivizing any agent $i \in [K]$ to play arm $1$ gives utility at most $\frac{2+\epsilon}{3} -(1-\beta_{K-1})$, which is negative. Notice that $\frac{2+\epsilon}{3} -(1-\beta_{K-1}) <0$ can be verified by using $\epsilon=\sqrt{\frac{K-2}{10T}}$ and $T >4(K-2)^3$.
However, $p_1=\frac{1}{2}$ implies that incentive $(0,\ldots,0)$ yields the expected utility $\frac{1}{3}+\frac{\epsilon}{6}$. On the other hand, for any single-arm incentive $\pi \notin \Pi$ which has positive value on arm $2$, if the value is strictly smaller than $\frac{2}{3}-\beta_{K-1}$, then it again functions the same as $(0,\ldots,0)$. If the value is larger than or equal to $\frac{2}{3}-\beta_{K-1}$, then there must exist $i \in [K-1]$ such that the expected utility of $\pi_i$ is no worse than that of $\pi$.

Now, we construct two instances $I,I'\in \calI$.

\textbf{Construction of Instance $I$.}
For instance $I$, the agent type is sampled from a distribution $p=(p_1,\ldots,p_K)$  where
\[
p_i=\begin{cases} 
      \frac{1}{2} & \text{if $i=1$},\\
      \frac{1}{3(K-2)}& \text{if $i \in \{2,\ldots,K-1\}$} , \\
              \frac{1}{6}  & \text{if $i=K$} .\\
   \end{cases}
\]

We treat each incentive in $\Pi$ as an arm, and $|\Pi|=K-1$ .
Let $r(\pi_i)$ be the expected utility of arm (incentive) $\pi_i$. For shorthand, we write $r_i := r(\pi_i)$.
For instance $I$, the expected utility vector of arms is denoted by
$(r_1,\ldots,r_{K-1})$.
Each arm follows a Bernoulli-type distribution.
Arm $1$ corresponds to the incentive $(0,0,\ldots,0)$ and takes value $\frac{2}{3}+\frac{\epsilon}{3}$ with probability $p_1=\frac{1}{2}$, and takes $0$ with equal probability, which implies that its expected utility is
\[
r_{1}=v_1 p_1= \frac{1}{2} \rbr{\frac{2}{3}+\frac{\epsilon}{3}} = \frac{1}{3}+\frac{\epsilon}{6}.
\]

Similarly, arm (incentive) $\pi_2$ takes the utility of $\frac{2}{5}$ with probability $1-p_K=\frac{5}{6}$ and takes $0$ with remaining probability, because it takes the value of $\frac{2}{5}$ as long as the agent $K$ does not show up at that certain round, and thus the expected utility is $r_2=\frac{1}{3}$.
Moreover, one can verify that for each arm $\pi_i$ with $i \in \{3,\ldots,K-1\}$, the utility takes value of $1-\pi_{i,2}=\frac{1}{3}+\beta_{i}$ with probability $1-p_K-\sum_{z \in \{2,\ldots,i-1\} }p_z$, and takes $0$ with remaining probability. Thus the expected utility is
\[
r_{i} = \rbr{1-p_K-\sum_{z \in \{2,\ldots,i-1\} }p_z} \rbr{1-\pi_{i,2}} = \rbr{\frac{5}{6}-\frac{i-2}{3(K-2)}  }\rbr{\frac{1}{3} +\beta_{i}}=\frac{1}{3} .
\]

% For each $ i \in \{2,\ldots,K-1\}$, arm $i$ corresponds to incentive that the $i$-th coordinate takes value $\frac{1}{2}+\frac{1}{3(K-2)}$ and others are zero. Each arm $i \in \{2,\ldots,K-1\}$ takes the value $v_i-(1-\mu^i_i)=\frac{1}{2}-\frac{1}{3(K-2)}$ with probability $p_1+p_i=\frac{1}{2}$, and their expected value is
% \[
% r_{ij}=\rbr{1-p_K-  \sum_{z \in \{2,\ldots,i-1\} }p_z}(v_j-(1-\beta_i-\mu^i_j)) =  \frac{1}{4}-\frac{\epsilon}{2}.
% \]

Thus, arm $\pi_1$ is uniquely optimal, and the arm gaps for all others are the same, denoted by
\[
\Delta := \frac{\epsilon}{6}.
\]

\textbf{Construction of Instance $I'$.}
Then, we fix an arbitrary policy $A \in \calA$ and construct another instance $I'$ by modifying $p$ in instance $I$ to $p'$. Let $z = \argmin_{j \in \{2,\ldots,K-1\}}\E_{A,I_1}[T_j]$ where $T_j$ is the number of plays of arm $j$ across $T$ rounds. 
Then, we set the probability distribution $p'=(p_1',\ldots,p_K')$ by considering two cases of $z$. For shorthand, let
\[
l_{z,\epsilon}=\epsilon \rbr{\frac{5}{6}-\frac{z-2}{3(K-2)}}  \in \Big(\frac{\epsilon}{2} ,\frac{5 \epsilon}{6}  \Big].
\]

\textbf{Case 1: $z\geq 3$.} In this case, we set
\[
p_i'=\begin{cases} 
      \frac{1}{2} & i=1,\\
      \frac{1}{3(K-2)}& i \in \{2,\ldots,K-1\}-\{z-1,z\}, \\
     \frac{1}{3(K-2)}- l_{z,\epsilon}& i =z-1, \\
     \frac{1}{3(K-2)}+ l_{z,\epsilon}& i =z, \\
      \frac{1}{6}  & i=K.
   \end{cases}
\]

\textbf{Case 2: $z=2$.} In this case, we set
\[
p_i'=\begin{cases} 
      \frac{1}{2} & i=1,\\
      \frac{1}{3(K-2)}& i \in \{3,\ldots,K-1\}, \\
     \frac{1}{3(K-2)}+ l_{z,\epsilon}& i =2, \\
      \frac{1}{6} - l_{z,\epsilon} & i=K.
   \end{cases}
\]

Notice that one can make a sanity check for all $z$
\[
\frac{1}{3(K-2)}- l_{z,\epsilon} \geq \frac{1}{3(K-2)}- \frac{5 \epsilon}{6} = \frac{1}{3(K-2)}- \frac{5 }{6}\sqrt{\frac{K-2}{10T}} >0 ,
\]
where the last inequality holds due to the assumption $T>(K-2)^3$. 
Moreover, $\frac{1}{6}-l_{z,\epsilon} \geq \frac{1-5\epsilon}{6} >0$ as $\epsilon<1/10$.
Therefore, $p'$ is a valid distribution for both cases.

In Case $1$ with $z \geq 3$, the expected utility of arm $z$ in instance $I_2$ is 
\begin{align*}
    r_z' = \rbr{1-p_K-\sum_{j \in \{2,\ldots,i-1\} }p_j + l_{z,\epsilon}} \rbr{ \frac{1}{3}+\beta_i}= r_z+2\Delta.
\end{align*}

Similarly, in Case $2$ with $z=2$, we also have $r_z' =r_z+2\Delta$.
Thus, the expected utility vector in instance $I_2$ is  
\[
\rbr{r_1,\ldots,r_z+2\Delta,\ldots,r_{K-1} }.
\]

In this case, arm $z$ becomes the unique optimal arm.
For instances $I_1,I_2$, the utility distribution of each arm $i \in [K-1]-\{z\}$ remains unchanged, but only the one of arm $z$ slightly changes.

\textbf{Proof of Lower Bound.}
Let $(D_1,\ldots,D_{K-1})$ and $(D_1',\ldots,D_{K-1}')$ be the utility distributions over arm $\Pi$ in instances $I,I'$, respectively. For shorthand, if $z=2$, let $q_z= \frac{5}{6}$, and if $z \in \{3,\ldots,K-1\}$, let $q_z=\frac{5}{6}-\sum_{j \in \{2,\ldots,i-1\} }p_j$. 
We use $\KL(P,Q)$ to denote the $\KL$ divergence between two distributions $P,Q$.
We define an event $\calE := \cbr{T_1\leq \frac{T}{2}}$ and its complement as $\calE^c$.
Then, by a standard lower-bound argument used in MAB, we have
\begin{align*}
  R_{T}^{A,I} + R_{T}^{A,I'} & \geq \frac{T \Delta}{2} \rbr{ \P_{A,I}(\calE) +\P_{A,I'}(\calE^c) } \\
  & \geq \frac{T\Delta}{2} \rbr{ 1-\max_{\calE}\abr{\P_{A,I}(\calE) -\P_{A,I'}(\calE^c) }}   \\
   & \geq \frac{T\Delta}{2}\rbr{1-\sqrt{ \frac{1}{2} \KL \rbr{\P_{A,I}||\P_{A,I'}}} },
\end{align*}
where the second inequality uses $ \P_{A,I}(\calE) +\P_{A,I'}(\calE^c) = \P_{A,I}(\calE) +1-\P_{A,I'}(\calE) \geq 1- |\P_{A,I}(\calE)-\P_{A,I'}(\calE)| \geq  1- \max_{\calE}|\P_{A,I}(\calE)-\P_{A,I'}(\calE)|$, and the third inequality follow from Pinsker's inequality.

The definition of $z$ implies that $\E_{A,I}[T_z] \leq T/(K-2)$.
By the divergence decomposition lemma, we have
\[
\KL \rbr{\P_{A,I}||\P_{A,I'}} = \E_{A,I}[T_z] \KL(D_z||D_z') \leq \frac{180 T \Delta^2 }{K-2}.
\]

Therefore, we have
\[
 R_{T}^{A,I} + R_{T}^{A,I'}  \geq \frac{T \Delta}{2} \rbr{1- \Delta \sqrt{ \frac{90T }{K-2}  } }.
\]

Recall that $\Delta=\epsilon/6$. Finally, as $\epsilon=\sqrt{\frac{K-2}{10T}}$, we have $\Delta=\frac{\epsilon}{6}=\sqrt{\frac{K-2}{360T}}$, which implies that $1- \Delta \sqrt{ \frac{90T }{K-2}  } \geq \frac{1}{2}$. Thus,
\[
 R_{T}^{A,I} + R_{T}^{A,I'}  =\Omega (\sqrt{KT}),
\]
which suffices to give the lower bound.

\textbf{Bounding $\KL(D_z||D_z')$.} 
In instances $I,I'$, arm $z$ take the same positive value with probability $q_z$ and $q_z+l_{z,\epsilon}$ respectively, and zero with remaining probabilities.
We bound 
\begin{align*}
   \KL(D_z||D_z')  &=    q_z \log \rbr{\frac{q_z }{q_z+l_{z,\epsilon}}} + (1-q_z) \log \rbr{\frac{1-q_z }{1-q_z-l_{z,\epsilon}}}.
\end{align*}

By using the fact that $\log(1+x) \geq x-x^2/2$ for any $x \geq 0$, we have
\begin{align*}
    q_z \log \rbr{\frac{q_z }{q_z+l_{z,\epsilon}}}   = -q_z \log \rbr{1+\frac{l_{z,\epsilon}}{q_z}} \leq -q_z \rbr{ \frac{l_{z,\epsilon}}{q_z} -\frac{l_{z,\epsilon}^2}{2q_z^2}  } = l_{z,\epsilon}-\frac{l_{z,\epsilon}^2}{2q_z}.
\end{align*}

The second term is bounded by
\begin{align*}
   (1-q_z) \log \rbr{\frac{1-q_z }{1-q_z-l_{z,\epsilon}}} &= - (1-q_z) \log \rbr{ 1-\frac{l_{z,\epsilon}}{1-q_z} }\\
   &\leq  (1-q_z)  \rbr{ \frac{l_{z,\epsilon}}{1-q_z} + \frac{1}{2}\rbr{\frac{l_{z,\epsilon}}{1-q_z}}^2  + \frac{2}{3}  \rbr{\frac{l_{z,\epsilon}}{1-q_z}}^3 } \\
    &=  l_{z,\epsilon} +\frac{l_{z,\epsilon}^2}{2(1-q_z)}  +  \frac{2}{3}\frac{l_{z,\epsilon}^3}{(1-q_z)^2} ,
\end{align*}
where the second inequality uses series expansion and the fact that $\frac{l_{z,\epsilon}}{1-q_z} \leq 5\epsilon < 1/2$ as $q_z \leq \frac{5}{6}$ and $\epsilon<1/10$.

Combining both and using $\Delta=\epsilon/6$, we have
\[
 \KL(D_z||D_z')  \leq \frac{l_{z,\epsilon}^2}{2(1-q_z)} -\frac{l_{z,\epsilon}^2}{2q_z} + \frac{2}{3}\frac{l_{z,\epsilon}^3}{(1-q_z)^2} \leq 5\epsilon^2 =180 \Delta^2.
\]

Thus, the proof completes.
\end{proof}

% \textbf{Discussion on applicability to general incentive.} It suffices to show that for any general incentive arm $\pi' \notin \Pi$ where $\Pi$ is defined in XXX, there always exists an incentive $\pi \in \Pi$ such that the expected reward of $\pi'$ is no larger than that of $\pi$.
% Due to our construction, it suffices to prove the claim for incentive all $\pi'\notin \Pi$ in the form of $(x,y,0,\ldots,0)$ 
% where $x,y>0$. If $x<1-\beta_{K-1}$ or $y<\frac{2}{3}-\beta_{K-1}$, then, $(x,y,0,\ldots,0)$ functions the same as $(0,y,0,\ldots,0)$ or $(x,0,0,\ldots,0)$, respectively. As we have shown that 

%\input{ICML2025/extra}

\subsection{$\Omega\left(\sqrt{KT\log(N)/\log(K)}\right)$ worst case lower bound}\label{appendix:lower-bound-N>=K}
In this section, we construct an instance with $K$ agents and $N$ arms and establish a worst-case lower bound of $\Omega\left(\sqrt{\frac{KT \log N}{\log K}}\right)$ for all $N \in \{K, K+1, \ldots, 2^{(K-2)/48}\}$. This also implies a worst-case lower bound of $\Omega\left(K \sqrt{\frac{T}{\log K}}\right)$ for all $N \geq 2^{(K-2)/48}$.

Let $K_0 := K - 2$, and define $M := \frac{\log(N - 1)}{\log(K - 2)}$. Let $N_0 := \left(\frac{K_0}{M}\right)^M + 1$. Note that $\log(N_0 - 1) = M \log(K_0 / M) < M \log(K_0) = \log(N - 1)$, which implies $N_0 < N$.

For simplicity, we assume that both $M$ and $\frac{K_0}{M}$ are integers. We also assume $K \geq 1000$. Fix an $\varepsilon > 0$ to be specified later in the proof.

Let $d_i=(i-1)\cdot \frac{K_0}{M}$ for all $j\in [M]$. Let $\calX:=\{x\in \{0,1\}^{K_0}: \forall i\in[M]\; \sum_{j=1}^{K_0/M}x_{d_i+j}=1\}$. Now consider a bijective mapping $f:\calX\mapsto [N_0-1]$. First we define the reward vector $v$ for the principal. We have $v_i:=0.5+\frac{1}{T}$ for each $i\in [N_0-1]$ and $v_i:=0$ for all $i\in \{N_0,N_0+1,\ldots,N\}$. 

Next we define the preference vector $\mu^j$ for an agent $j\in [K_0]$. For each $x\in \calX$, we define $\mu^j_{f(x)}:=1-\frac{1}{T}$ if $x_j=1$ and  $\mu^j_{f(x)}=0$ otherwise. We define $\mu^j_{N_0}:=1$ and $\mu^j_i:=0$ for all $i\in[N]\setminus[N_0]$. Next we define the preference vector $\mu^{K-1}$ for the agent $K-1$. We have $\mu^{K-1}_i:=\frac{1}{T}$ if $i=N_0$ and $\mu^K_i:=0$ otherwise. Finally we define the preference vector $\mu^{K}$ for the agent $K$. We have $\mu^{K}_i:=1$ if $i=N_0$ and $\mu^K_i:=0$ otherwise. In case of tie, an agent prefers an arm in $[N_0-1]$ over the arm $N_0$. Let $\Pi=\{\pi^x\}_{x\in\calX}$ be a set of incentive vectors. 

For all $x\in\calX$, we have $\pi^x_i=\frac{1}{T}$ if $i=f(x)$ and $\pi^x_i=0$ otherwise. Now observe that for any incentive vector $\pi\in [0,1]^N$, there always exists a vector $\pi^x\in \Pi$ such that the regret incurred by $\pi^x$ is at most the regret incurred by $\pi$. Hence, we assume that any algorithm chooses one of the incentive vectors from $\Pi$. Let us fix one such algorithm, say \texttt{Alg} and assume that \texttt{Alg} is deterministic. Now we show that algorithm \texttt{Alg} incurs a regret of $\Omega(\sqrt{KT\log(N)/\log(K)})$. One can easily extend the result to randomized algorithms using Yao's lemma.

Let $\tilde \calX:=\{x\in \{0,1\}^{K_0}: \forall i\in[M]\; \sum_{j=1}^{K_0/M}x_{d_i+j}\leq 1\}$. We define a mapping $p:\tilde \calX\times [K]\mapsto [0,1] $ as follows. First, we define $p(x,K)=\frac{1}{4}$ for any $x\in\tilde \calX$. Next, we define $p(x,K)=\frac{1}{4}-\frac{\varepsilon}{M}\cdot ||x||_1$ for any $x\in\tilde \calX$. Next for any $x\in \tilde\calX$ and $j\in [K_0]$, we have $p(x,j)=\frac{1}{2K_0}+x_{j}\cdot \frac{\varepsilon}{M}$.

    We now describe an input instance $I_x$, where $x \in \tilde{\mathcal{X}}$. In the instance $I_x$, in each round an agent $j$ arrives with probability $p(x,j)$.  If $x$, then observe that the expected reward of playing an incentive vector $\pi^z$ under the instance $I_x$ is $r(x,z):=\frac{1}{4}\cdot\frac{1}{2}+\sum_{j=1}^{K_0}p(x,j)\cdot\mathbbm{1}\{z_j=1\}\cdot\frac{1}{2}=\frac{1}{8}+\frac{M}{4K_0}+\frac{\varepsilon}{2M}\cdot\sum_{j=1}^{K_0}\mathbbm{1}\{z_j=1,x_j=1\}$, where $z\in\calX$. If $x\in \calX$, then observe that $\pi^x$ is the optimal incentive vector for the instance $I_x$ and the expected reward is $\mu^*:=\frac{1}{8}+\frac{M}{4K_0}+\frac{\varepsilon }{2}$. Let $\calI=\bigcup_{x\in \calX}I_x$ be the set of input instances that we analyze our regret on.  

    %Let $\calI=\bigcup_{x\in \tilde \calX}I_x$ be the set of all input instances.
    
    At each round $t$, suppose \texttt{Alg} selects $\pi^{z_t} \in \Pi$. For the instance $I_x$ with $x \in \mathcal{X}$, the cumulative regret after $T$ rounds, $R_x(T)$, is defined as:
    \[R_x(T)=T\cdot \mu^*-\mathbb{E}_{I_x}[\sum_{t=1}^Tr(x,z_t)]=\frac{\varepsilon T}{2}-\frac{\varepsilon}{2M}\sum_{t=1}^T\sum_{j=1}^{K_0}\mathbb{P}_{I_x}[(z_t)_j=1,x_j=1],\]
    where $\mathbb{P}_{I_x}$ is probability law under the instance $I_x$

    % For the set of instances $\calI$, we have $\calX=\{x\in [0,1]^{K^2}: \sum_{j=1}^Kx_{(i-1)K+j}=1\; \forall i\in[K] \}$ and $\calY=\{y\in [0,1]^{K}: \sum_{j=1}^Ky_j=1\; \forall i\in[K]\}$. For an instance $\calI_S\in\calI$, the matrix $A_S$ is defined as follows: for all $i\in [K]$ and $j\in [K]$ $(A_S)_{(i-1)K+j,j}=\mu_{v_{i,j}}$ and the rest of entries are 0. Hence in this case, for any $x\in \calX$ and $y\in \calY$, we have $\langle x,A_S y\rangle=\sum_{i=1}^K\sum_{j=1}^Kx_{(i-1)K+j}\mu_{v_{i,j}}y_j$. In this proof, we fix $y=(\frac{1}{K},\frac{1}{K},\ldots,\frac{1}{K})$. In this case $\langle x,A_S y\rangle=\frac{1}{K}\sum_{i=1}^K\sum_{j=1}^Kx_{(i-1)K+j}\mu_{v_{i,j}}$. 

    % Now we define few notations. For any tuple $S=(j_1,j_2,\ldots,j_k)$, let $S_i$ denote the $i$-th index $j_i$. For any tuple $S$, let $S_{-i}$ be the tuple created by removing the $i$-th index. That is, if $S=(j_1,j_2,\ldots,j_K)$ then $S_{-i}=(j_1,\ldots,j_{i-1},j_{i+1},\ldots,j_K)$. Fix an index $i\in[K]$. Fix a tuple $S^{(i)}\in \{1,2,\ldots,K\}^{K-1}$. Let $S^{(i,j)}\in \{1,2,\ldots,K\}^{K}$ denote a tuple such that $S_i^{(i,j)}=j$ and $S_{-i}^{(i,j)}=S^{(i)}$.
    
    Recall that $d_i=(i-1)\cdot \frac{K_0}{M}$ for all $i\in [M]$. For any instance $I_x$ with $x \in \mathcal{X}$, the regret can be written as $R_x(T) = \sum_{i=1}^{M} R_{x,i}(T)$, where
    \[R_{x,i}(T)=\frac{\varepsilon T}{2M}-\frac{\varepsilon}{2M}\sum_{t=1}^T\sum_{j=1}^{K_0/M}\mathbb{P}_{I_x}[(z_t)_{d_i+j}=1,x_{d_i+j}=1].\] 
    
    Fix an index $i\in[M]$. Let $\calX^{(i)}=\{x\in \{0,1\}^{K_0}: \forall j\in[M]\setminus\{i\}\; \sum_{s=1}^Mx_{d_j+s}=1,\; \sum_{s=1}^Mx_{d_i+s}=0\}$. For any vector $x\in \calX^{(i)}$, let $x^{(j)}$ denote a vector in $\calX$ such that $x^{(j)}_{d_{i}+j}=1$ and $x^{(j)}_s=x$ for all $s\in [K_0]\setminus\{d_i+j\}$. We claim that for any $x \in \mathcal{X}^{(i)}$, there exists a set $\mathcal{S}_x \subseteq \left[\frac{K_0}{M}\right]$ of size at least $\frac{K_0}{3M}$ such that for each $j \in \mathcal{S}_x$, we have $R_{x^{(j)}, i}(T) \geq c \cdot \sqrt{\frac{K_0}{M} \cdot T}$, where $c$ is an absolute constant.

    Before proving the claim, we first show that if it holds for any index $i \in [M]$, then we have $\mathbb{E}_{I_x\sim Unif(\calI)}[R_x(T)]\geq c'\cdot \sqrt{MK_0T}=c'\cdot\sqrt{(K-2)\cdot T\cdot \frac{\log(N-1)}{\log(K-2)}}$ where $c'$ is an absolute constant. As $R_{x}(T)=\sum_{i=1}^MR_{x,i}(T)$, it suffices to show that $\mathbb{E}_{I_x\sim Unif(\calI)}[R_{x,i}(T)]\geq c'\cdot \sqrt{\frac{K_0}{M}\cdot T}$ for all $i\in [M]$. Fix an index $i\in[M]$. Now we have the following:
    \begin{align*}
        \mathbb{E}_{I_{x'}\sim Unif(\calI)}[R_{x',i}(T)]&=\frac{1}{(K_0/M)^{M}}\sum_{x\in \calX^{(i)}}\sum_{j=1}^{K_0/M} R_{x^{(j)},i}(T)\\
        &\geq \frac{1}{(K_0/M)^{M}}\sum_{x\in \calX^{(i)}}\sum_{j\in \calS_x}R_{x^{(j)},i}(T)\\
        &\geq \frac{c}{(K_0/M)^{M}}\sum_{x\in \calX^{(i)}}\sum_{j\in \calS_x}\sqrt{\frac{K_0}{M}\cdot T}\\
        &\geq \frac{c}{(K_0/M)^{M}}\sum_{x\in \calX^{(i)}}\frac{K_0}{3M}\cdot \sqrt{\frac{K_0}{M}\cdot T}\\
        & = \frac{c}{(K_0/M)^{M}} \cdot (K_0/M)^{M-1}\cdot \frac{K_0}{3M}\cdot \sqrt{\frac{K_0}{M}\cdot T} \\
        & =\frac{c}{3}\cdot \sqrt{\frac{K_0}{M}\cdot T}\\
    \end{align*}

   We now prove the claim for a fixed index $i \in [M]$ and a fixed vector $x \in \mathcal{X}^{(i)}$, using the chain rule from Lemma~\ref{chain-rule} in our analysis.
    % \begin{lemma}[Chain Rule]
    %     Let $f(x_1,x_2,\ldots,x_n)$ and $g(x_1,x_2,\ldots,x_n)$ be two joint PMFs for a tuple of random variables $(X_i)_{i\in[n]}$. Let the sample space be $\Omega= \{0,1\}^{n}$. Then we have the following:
    %     \begin{equation*}
    %         \KL(f,g)=\sum\limits_{\omega\in \Omega}f(\omega)\left(\KL(f(X_1),g(X_1))+\sum_{i=2}^n \KL(f(X_i|X_{-i}=\omega_{-i}),g(X_i|X_{-i}=\omega_{-i}))\right)\;
    %     \end{equation*}
    %     where $X_{-i}=(X_1,\ldots,X_{i-1})$, $\omega_{-i}=(\omega_1,\ldots,\omega_{i-1})$.
    % \end{lemma}

    For an instance $I_{x^{(j)}}$, let $f_j(a_1, \ldots, a_T)$ denote the joint PMF over the sequence of binary decisions by the agent in each round under the probability distribution $\mathbb{P}_{I_{x^{(j)}}}$. The sample space is $\Omega = \{0,1\}^T$, where $0$ indicates that the agent selects arm $N_0$, and $1$ indicates that the agent selects the arm incentivized by \texttt{Alg}. This is a valid sample space, as the agent chooses either the arm incentivized by \texttt{Alg} or arm $N_0$ in each round. Similarly, for the alternate instance $I_x$, let $f_0(a_1, \ldots, a_T)$ denote the joint PMF of the binary decisions under the distribution $\mathbb{P}_{I_x}$.

% \begin{figure}[h]
%     \centering
%     \includegraphics[width=\textwidth]{Images-3.pdf} % or output.eps
%     %\caption{Instance $I_S$ where $S=(j_1,j_2,\ldots,j_K)$}
%     %\label{fig:yourlabel}
%     \end{figure}
    First, observe that the instances $I_{x^{(j)}}$ and $I_x$ differ only at agent $d_i + j$. For each $\omega \in \Omega$, let $\pi^{z_{1,\omega}}, \pi^{z_{2,\omega}}, \ldots, \pi^{z_{T,\omega}}$ denote the sequence of incentive vectors chosen by \texttt{Alg} on $\omega$. Conditioning on the outcomes $X_1 = \omega_1, X_2 = \omega_2, \ldots, X_{t-1} = \omega_{t-1}$, we have $X_t \sim \mathrm{Ber}(\tilde{\mu}_j)$ under instance $I_{x^{(j)}}$, and $X_t \sim \mathrm{Ber}(\tilde{\mu}_0)$ under instance $I_x$, where $\tilde{\mu}_j - \tilde{\mu}_0 = \frac{\varepsilon}{M} \cdot (z_{t,\omega})_{d_i + j}$. Define $T_j = \sum_{t=1}^T (z_t)_{d_i + j}$ as a random variable, and for each $\omega \in \Omega$, let $T_{j,\omega} = \sum_{t=1}^T (z_{t,\omega})_{d_i + j}$, which is a fixed value. Now we have the following:
    \begin{align*}
         \KL(f_0,f_j)&=\sum\limits_{\omega\in \Omega}f_0(\omega)\left(\KL(f_0(X_1),f_j(X_1))+\sum_{t=2}^T \KL(f_0(X_t|X_{-t}=\omega_{-t}),f_j(X_t|X_{-t}=\omega_{-t}))\right)\\
         &\leq\frac{c_0\varepsilon^2}{M^2} \sum\limits_{\omega\in \Omega}f_0(\omega)\sum_{t=1}^T(z_{t,\omega})_{d_i+j}\tag{due to \Cref{kl:bernoulli}}\\
         &=\frac{c_0\varepsilon^2}{M^2} \sum\limits_{\omega\in \Omega}f_0(\omega)T_{j,\omega}\\
         &= \frac{c_0\varepsilon^2}{M^2}\cdot \mathbb{E}_{I_{x}}[T_{j}]
    \end{align*}
    where $c_0$ is some absolute constant.

    Observe that $\sum_{j=1}^{K_0/M} \mathbb{E}_{I_x}[T_j] = T$. Therefore, there exists a subset $\mathcal{S}_x \subseteq \left[\frac{K_0}{M}\right]$ of size at least $\frac{K_0}{3M}$ such that for each $j \in \mathcal{S}_x$, we have $\mathbb{E}_{I_x}[T_j] \leq \frac{3MT}{K_0}$. Fix $\varepsilon = \sqrt{\frac{MK_0}{25c_0T}}$. Then for each $j \in \mathcal{S}_x$, we have $\mathrm{KL}(f_0, f_j) \leq \frac{3c_0 \varepsilon^2 T}{MK_0} = \frac{3}{25}$.

    Fix an index $j \in \mathcal{S}_x$, and let $A_j$ denote the event that $T_j \leq \frac{12MT}{K_0}$. By Markov's inequality, we have $\mathbb{P}_{I_x}(A_j) \geq \frac{3}{4}$. Then, by Pinsker's inequality, we obtain the following:
    \begin{align*}
        \mathbb{P}_{I_{x^{(j)}}}(A_j)&\geq  \mathbb{P}_{I_x}(A_j)-\sqrt{\frac{\KL(f_0,f_j)}{2}}\\
        &\geq \frac{3}{4}-\sqrt{\frac{3}{50}}\\
        &>\frac{1}{2}
    \end{align*}

    Using the inequality above, we have $\mathbb{E}_{I_{x^{(j)}}}[T_j] \leq T \cdot \mathbb{P}_{I_{x^{(j)}}}(A_j^c) + \frac{12MT}{K_0} \leq \frac{3T}{4}$, since $K_0/M \geq 48$. Now we have the following:
    \begin{align*}
        R_{x^{(j)},i}(T)&=\frac{\varepsilon T}{2M}-\frac{\varepsilon}{2M}\sum_{t=1}^T\sum_{s=1}^{K_0/M}\mathbb{P}_{I_{x^{(j)}}}[(z_t)_{d_i+s}=1,x_{d_i+s}^{(j)}=1]\\
        &=\frac{\varepsilon T}{2M}-\frac{\varepsilon}{2M}\mathbb{E}_{I_{x^{(j)}}}[\sum_{t=1}^Tz_t[d_{i}+j]]\\
        &=\frac{\varepsilon T}{2M}-\frac{\varepsilon}{2M}\mathbb{E}_{I_{x^{(j)}}}[T_j]\\
        &\geq \frac{\varepsilon T}{2M}-\frac{3\varepsilon T}{8M}\\
        &=\frac{1}{40}\cdot\sqrt{\frac{K_0T}{c_0M}}
    \end{align*}

\textbf{Remark:} All the calculations and probability expressions in this section are valid for $T\geq \text{poly}(K)$.

%\textcolor{red}{TODO: Fill in the missing calculations for $\KL$-divergence. Recheck calculations and consistency of notations. Add all the inequalities for all input parameters like K>=48.}

\section{Linear Bandit based approach for Principal-agent problem with single arm incentive.}\label{app:upper-greedy-single}
We make a natural assumption on any agent's tie braking rule. If there is a tie among a set of arms $\calI\subseteq [N]$ and an arm $i_\star\in \calI$ gets chosen by an agent $j$, then for a tie among a set of arms $\calI'\subseteq \calI$ such that $i_\star\in\calI'$ the agent $j$ again chooses the arm $i_\star$. Towards the end of the section we show a way to circumvent this assumption. For simplicity of presentation, we also assume that each preference vector has at least two coordinates with different values.

First, we discretize the space of incentive vectors $\calD = \{x \in [0,1]^N : |\supp(x)| \leq 1\}$ as follows. We initialize an empty set $\Pi$. For each arm $i \in [N]$ and agent $j \in [K]$, let $\pi^{i,j}$ be an incentive vector defined by $\pi^{i,j}_s = 0$ for $s \neq i$ and $\pi^{i,j}_i = \max_{k \in [N]} \mu^j_k - \mu^j_i$. We then add $\pi^{i,j}$ to $\Pi$. For all $i \in [N]$, we define $\pi^{i,K+1} := (0,0,\ldots,0)$, noting that this incentive vector is already included in $\Pi$.  

Now, we define $\Delta := \min_{i \in [N]} \min_{j_1, j_2 \in [K] : \pi^{i,j_1}_i \neq \pi^{i,j_2}_i} |\pi^{i,j_1}_i - \pi^{i,j_2}_i|$ and set $\varepsilon_T := \min \left\{ \frac{\Delta}{2}, \frac{1}{2T} \right\}$. For each $i \in [N]$ and $j \in [K+1]$, let $\tilde{\pi}^{i,j}$ be an incentive vector defined by $\tilde{\pi}^{i,j}_s = 0$ for $s \neq i$ and $\tilde{\pi}^{i,j}_i = \pi^{i,j}_i + \varepsilon_T$. We then add $\tilde{\pi}^{i,j}$ to the set $\Pi$. We now claim that the following holds for any sequence of agents $j_1, j_2, \dots, j_T$:
\begin{equation}
    \max_{\pi\in \Pi}\sum_{t=1}^TU(\pi,j_t)\geq \left(\sup_{\pi\in \calD}\sum_{t=1}^TU(\pi,j_t)\right)-1. \label{sup:close1}
\end{equation}

Consider $\hat{\pi} \in \calD$ such that $\sum_{t=1}^{T} U(\hat{\pi}, j_t) \geq \left( \sup_{\pi \in \calD} \sum_{t=1}^{T} U(\pi, j_t) \right) - \varepsilon_T$. Let $\hat{i} = \arg\max_{i \in [N]} \hat{\pi}_i$. If there exists an index $j \in [K+1]$ such that $\hat{\pi}_{\hat{i}} = \pi^{\hat{i},j}_{\hat{i}}$, then the inequality $(\ref{sup:close1})$ clearly holds. Otherwise, let $\hat{j} = \arg\max_{j \in [K+1] : \pi^{\hat{i},j}_{\hat{i}} < \hat{\pi}_{\hat{i}}} \pi^{\hat{i},j}_{\hat{i}}$. Observe that $b(\hat{\pi}, j_t) = b(\tilde{\pi}^{\hat{i},\hat{j}}, j_t)$ for all $t \in [T]$. If $\tilde{\pi}^{\hat{i},\hat{j}}_{\hat{i}} \leq \hat{\pi}_{\hat{i}}$, then $\sum_{t=1}^{T} U(\tilde{\pi}^{\hat{i},\hat{j}}, j_t) \geq \sum_{t=1}^{T} U(\hat{\pi}, j_t) \geq \left( \sup_{\pi \in \calD} \sum_{t=1}^{T} U(\pi, j_t) \right) - \varepsilon_T$. If $\tilde{\pi}^{\hat{i},\hat{j}}_{\hat{i}} > \hat{\pi}_{\hat{i}}$, then $U(\tilde{\pi}^{\hat{i},\hat{j}}, j_t) \geq U(\hat{\pi}, j_t) - \varepsilon_T$ as $\pi^{\hat i,\hat j}_{\hat i}<\hat\pi_{\hat i}<\tilde\pi^{\hat i,\hat j}_{\hat i}=\pi^{\hat i,\hat j}_{\hat i}+\varepsilon_T$. Hence, $\sum_{t=1}^{T} U(\tilde{\pi}^{\hat{i},\hat{j}}, j_t) \geq \sum_{t=1}^{T} U(\hat{\pi}, j_t) - T \cdot \varepsilon_T \geq \left( \sup_{\pi \in \calD} \sum_{t=1}^{T} U(\pi, j_t) \right) - 1$. Thus, the inequality $(\ref{sup:close1})$ holds in this case as well.

The rest of the proof is identical to the main body. Nevertheless, we include it for completeness. Let $\mathbf{0}$ denote the incentive vector $(0,0,\ldots,0)$. Let us define a mapping $h:\Pi\rightarrow \{0,1\}^K$. Consider $\pi\in \Pi\setminus\{\mathbf{0}\}$. Let $a(\pi):=\arg\max_{i\in[N]}\pi_i$. For any $j\in [K]$, we have $(h(\pi))_j=1$ if $b(\pi,j)=a(\pi)$ otherwise we have $(h(\pi))_j=0$. We now construct a new set $\widehat \Pi$ as follows. Consider a vector $s\in \{0,1\}^K$. Let $\Pi_s:=\{\pi\in \Pi\setminus\{\mathbf{0}\}:h(\pi)=s\}$. If $|\Pi_s|=1$, then add the incentive vector in $\Pi_s$ to $\widehat\Pi$. If $|\Pi_s|>1$, let $\hat \pi=\arg\max_{\pi\in \Pi_s}v_{a(\pi)}-\pi_{a(\pi)}$. Observe that for any $j\in [K]$, if $s_j=1$ then $U(\hat \pi, j)=v_{a(\hat\pi)}-\hat \pi_{a(\hat \pi)}=\max_{\pi\in \Pi_s}v_{a(\pi)}-\pi_{a(\pi)}=\max_{\pi\in \Pi_s}U(\pi,j)$. On the other hand if $s_j=0$, then $b(\hat\pi,j)=b(\pi,j)$ and $\hat\pi_{b(\hat\pi,j)}=\hat\pi_{b(\pi,j)}=0$ for any $\pi\in\Pi_s$. Therefore we have $U(\hat\pi,j)=U(\pi,j)$ for any $\pi\in \Pi_s$. We now add $\hat \pi$ to $\widehat \Pi$. We also add $\mathbf{0}$ to $\widehat\Pi$. We have $|\widehat \Pi|\leq \min\{2KN+N,2^K\}+1$. Now observe that for any sequence of agents $j_1,j_2,\ldots,j_T$ we have:
\begin{equation}
    \max_{\pi\in \widehat\Pi}\sum_{t=1}^TU(\pi,j_t)\geq \left(\sup_{\pi\in \calD}\sum_{t=1}^TU(\pi,j_t)\right)-1 \label{sup:close2}
\end{equation}

Hence, it suffices to focus on the set of incentive vectors $\widehat \Pi$ for our regret minimization problem.
Now we show a reduction of this problem to an adversarial linear bandit problem. First we construct a set $\widehat\calZ\subset\mathbb{R}^{K}$ as follows. For each $\pi\in \widehat \Pi$, we add $z^\pi$ to $\widehat\calZ$ where $(z^\pi)_j=U(\pi,j)$ for all $j\in [K]$. Next we define the reward vector $y_t\in \mathbb{R}^{K}$. If agent $j_t$ has arrived in round $t$, we set $(y_t)_j=1$ if $j=j_t$ and zero otherwise. Now observe that for any $\pi\in \widehat \Pi$, $U(\pi,j_t)=\langle z^\pi,y_t\rangle$.  Hence, our pseudo-regret $R_T$ is equal to the adversarial linear bandit pseudo-regret:  
\[
R_T = \max_{z \in \widehat\calZ} \mathbb{E}\left[\sum_{t=1}^T \langle z, y_t \rangle - \sum_{t=1}^T \langle z_t, y_t \rangle\right].
\]  

By using EXP3 for linear bandits, we get a regret upper bound of $\order \left(\min\left\{\sqrt{KT\log(KN)}, K\sqrt{T}\right\}\right)$ for our regret minimization problem. 

Now let us consider the case when the natural assumption does not hold. In that case, we instead on the set of incentive vectors $\Pi$ for our regret minimization problem. Now we show a reduction of this problem to an adversarial linear bandit problem. First, we construct a set $\calZ \subset \mathbb{R}^K$ as follows: for each $\pi \in \Pi$, we add $z^\pi$ to $\calZ$, where $(z^\pi)_j = U(\pi, j)$ for all $j \in [K]$. Next, we define the reward vector $y_t \in \mathbb{R}^K$. If agent $j_t$ arrives in round $t$, we set $(y_t)_j = 1$ if $j = j_t$ and $(y_t)_j = 0$ otherwise. Observe that for any $\pi \in \Pi$, $U(\pi, j_t) = \langle z^\pi, y_t \rangle$. Hence, our pseudo-regret $R_T$ is equal to the adversarial linear bandit pseudo-regret:  
\[
R_T = \max_{z \in \calZ} \left[\sum_{t=1}^T \langle z, y_t \rangle - \sum_{t=1}^T \langle z_t, y_t \rangle\right].
\]  

Using the \textsc{EXP3} algorithm for linear bandits, we obtain a regret upper bound of $\order\sqrt{KT\log(N)}$ for our regret minimization problem as $|\calZ|=\order(KN)$. However, this upper bound can be further improved as follows. Let $\calR = \{e_j\}_{j \in [K]}$ be the set of all possible reward vectors. Define $\calZ_0$ as the smallest subset of $\calZ$ such that  
\[
\max_{z \in \calZ} \min_{z' \in \calZ_0} \max_{y \in \calR} |\langle z - z', y \rangle| \leq \frac{1}{T}.
\]  

By the above inequality, it suffices to focus on $\calZ_0$ for our regret minimization problem. It can be shown that $|\calZ_0| \leq \min\{2KN+N+1,(6KT)^K\}$. Using the \textsc{EXP3} algorithm for linear bandits on the reduced arm set $\calZ_0$, we achieve a regret upper bound of $\order\left(\sqrt{KT\log(N)},\min\{K \sqrt{T \log(KT)}\}\right)$.

%\arn{Consider the case where the natural assumption doesn't hold true and add the $\varepsilon$-net cover argument towards the end.}
% \subsection{Unknown reward vector $v$}
% \textcolor{red}{Do not read, there is a bug}

% In this section, we consider a setting where the reward vector $v$ is unknown. In round the principal chooses $\pi_t$ and observes $v_{b(\pi_t,j_t)}+\eta_t$ where $\eta_t$ is noise lying between $-1$ and $1$. We now aim to minimize the regret with respect to best fixed incentive vector in hindsight.

% As the inequality (\ref{sup:close1}) holds, it suffices to focus on the set of incentive vectors  $\Pi$ for our regret minimization problem. Recall that $|\Pi|\leq 2KN$. Now we show a reduction of this problem to an adversarial linear bandit problem. First we construct a set $\calZ\subset\mathbb{R}^{K+N+1}$ as follows. For each $\pi\in \widehat \Pi$, we add $z^\pi$ to $\calZ$ where $(z^\pi)_j=U(\pi,j)$ for all $j\in [K]$. Next we define the reward vector $y_t\in \mathbb{R}^{K}$. If agent $j_t$ has arrived in round $t$, we set $(y_t)_j=1$ if $j=j_t$ and zero otherwise. Now observe that for any $\pi\in \widehat \Pi$, $U(\pi,j_t)=\langle z^\pi,y_t\rangle$. Hence, our regret minimization problem is now equivalent to the following:
% \begin{equation*}
%     R(T)=\max_{z\in\calZ} \sum_{t=1}^T\langle z,y_t\rangle-\sum_{t=1}^T\langle z_t,y_t\rangle
% \end{equation*}

% By using EXP3 for linear bandits, we get a regret upper bound of $O(\min\{\sqrt{KT\log(KN)}, K\sqrt{T}\})$ for our regret minimization problem. 

\section{Smooth Demand with Unknown Agent Types}
\label{app:lowerbound-smooth-single}

% If \Cref{assum:lipschitz} holds, then slight change in incentive vector won't change the response of an agent significantly, so we can reduce the problem to a lipschitz bandit problem. 

% \begin{theorem}
% \label{thm: lipschitz-regret}
%     For one-product incentive problem, if \pref{assum:lipschitz} holds, then Algorithm \textsc{AdversarialZooming} from \cite{adv-lipschitz} has a regret upperbounded by the following:
%     \begin{align*}
%         R_T = \tilde{\mathcal{O}} (L^{1/3} n^{1/3} T^{2/3})
%     \end{align*}
% \end{theorem}

% \begin{remark}
%     Using any optimal MAB Algorithm such as Thompson sampling with fixed griding will result in regret of $\tilde{\mathcal{O}} (L n^{1/3} T^{2/3})$, which has $L^{2/3}$ performance gap with our proposed algorithm.
% \end{remark}

\subsection{Proof of \pref{thm:lipschitz-lowerbound}}

\begin{proof}
    We consider the instance were the principal reward of first $N - 1$ products is equal to $1$, and the last product is $0$.
    We partition $[0, \frac{1}{2}]$ incentive for each arm by $\floor{\epsilon}$ equal parts, and we propose a class of agent types $\{ A^{i, j} \}_{i \in [N - 1], j \in [K] }$, were given incentive $\pi$, the agent's selection probabilities is defined by Eq. \ref{eq:smooth-hard-instance-1}-\ref{eq:smooth-hard-instance-2}. 
    The bonus function is defined as $\mathbf{B}(x) := \frac{L - 1}{4} \min(x, \epsilon - x)$. Consequently, the expected principal reward when agent type $A^{i, j}$ arrives and incentive $\pi$ is offered with nonzero element to arm $i$ would be
    \begin{align*}
        \E[r(\pi)] = \sum_{l = 1}^{N - 1} (1 - \pi_l) \Pr[a(\pi, (i, j))] = \begin{cases}
            \frac{N - 1}{16N} + (1 - \pi_i) \mathbf{B}(\pi_i - j \epsilon) \quad&\text{if } \pi_i \in [j \epsilon, (j + 1) \epsilon] \\ 
            \frac{N - 1}{16N} \quad&\text{if } \pi \in [0, \frac{1}{2}]^N \\
            \le \frac{N - 1}{16N} \quad&\text{otherwise.}
        \end{cases}
    \end{align*}
    \begin{lemma}
        For all agent types $(i, j)$,  $\pref{assum:lipschitz}$ holds.
    \end{lemma}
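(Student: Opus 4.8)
The plan is to verify directly that, for every fixed agent type $(i,j)$, the arm--selection map $\pi\mapsto p(\pi):=\rbr{\Pr[a(\pi,(i,j))=l]}_{l\in[N]}$ is $L$--Lipschitz from $(\calD,\|\cdot\|_\infty)$ to the probability simplex equipped with the $\ell_1$ metric --- which is exactly the content of \pref{assum:lipschitz}. The decisive structural observation, read off from \pref{eq:smooth-hard-instance-1}, is that for each $l\le N-1$ the value $p_l(\pi)$ depends on $\pi$ only through the single coordinate $\pi_l$ (together with the fixed type), consistent with the expected--reward display. Since $p_N(\pi)=1-\sum_{l\le N-1}p_l(\pi)$, the last coordinate is slaved to the others, $|p_N(\pi)-p_N(\pi')|\le\sum_{l\le N-1}|p_l(\pi)-p_l(\pi')|$, so it suffices to bound $\sum_{l\le N-1}|p_l(\pi)-p_l(\pi')|$ and then multiply by $2$.

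Next I would establish the two one--dimensional Lipschitz estimates. For $l\ne i$ write $p_l(\pi)=g(\pi_l)$, where $g(t)=\tfrac{1}{16N(1-t)}$ on $[0,\tfrac12]$ and $g(t)=\tfrac{1}{8N}$ on $(\tfrac12,1]$; $g$ is continuous (the two pieces agree at $\tfrac12$), nondecreasing, and on $[0,\tfrac12]$ has derivative $\tfrac{1}{16N(1-t)^2}\le\tfrac{1}{4N}$, so $g$ is $\tfrac{1}{4N}$--Lipschitz on $[0,1]$. For $l=i$ write $p_i(\pi)=g(\pi_i)+h(\pi_i)$, where $h$ equals $\mathbf B(\cdot-j\epsilon)$ on $[j\epsilon,(j+1)\epsilon]$ and $0$ elsewhere; because $\mathbf B$ is $\tfrac{L-1}{4}$--Lipschitz with $\mathbf B(0)=\mathbf B(\epsilon)=0$, the zero--extension $h$ introduces no jump at the interval's endpoints and is therefore $\tfrac{L-1}{4}$--Lipschitz on all of $[0,1]$. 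Hence $p_l(\cdot)$ is $\tfrac{1}{4N}$--Lipschitz in $\pi_l$ for $l\ne i$ and $p_i(\cdot)$ is $\rbr{\tfrac{1}{4N}+\tfrac{L-1}{4}}$--Lipschitz in $\pi_i$. (Along the way one also checks $p_l\ge0$ and $\sum_l p_l\le1$ in the parameter regime of the construction, so this is indeed a probability distribution.)

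Finally I would assemble the bound: using $|\pi_l-\pi'_l|\le\|\pi-\pi'\|_\infty$ for every $l$ and writing $C_i=\tfrac1{4N}+\tfrac{L-1}{4}$, $C_l=\tfrac1{4N}$ for $l\ne i$,
\[
\sum_{l=1}^N\abr{p_l(\pi)-p_l(\pi')}\le 2\sum_{l=1}^{N-1}\abr{p_l(\pi)-p_l(\pi')}\le 2\Bigrbr{\sum_{l=1}^{N-1}C_l}\|\pi-\pi'\|_\infty\le 2\Bigrbr{\tfrac{N-1}{4N}+\tfrac{L-1}{4}}\|\pi-\pi'\|_\infty\le\tfrac{L}{2}\|\pi-\pi'\|_\infty,
\]
which is stronger than what \pref{assum:lipschitz} demands. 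Note the argument never used that $\pi,\pi'$ are single--arm, so the identical reasoning also covers the general--incentive version of this construction used later. I expect the only genuinely delicate point to be the bookkeeping of the second paragraph --- confirming that each $p_l$ with $l\le N-1$ is a function of $\pi_l$ alone and that the clipped reciprocal $g$ and the zero--extended bonus $h$ are globally Lipschitz with the stated constants; once that is in place, the rest is the one--line summation above, with the factor $2$ coming from the residual coordinate $p_N$ and ample slack since $L\ge3$.
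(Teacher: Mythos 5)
Your proof is correct, and it takes a genuinely different (and in some ways cleaner) route than the paper's. The paper exploits the single-arm structure directly: since $\pi$ and $\pi'$ each have at most one nonzero coordinate, only the selection probabilities at indices $l(\pi)$, $l(\pi')$, and the residual arm $N$ can differ, so the $\ell_1$ distance collapses to three terms, each bounded via the mean value theorem applied to $\tfrac{1}{16N(1-x)}$ plus the bonus $\mathbf{B}$, yielding the constant $L$ exactly. You instead prove a per-coordinate Lipschitz bound — $\tfrac{1}{4N}$ for each $l\neq i$ from the clipped reciprocal $g$, plus $\tfrac{L-1}{4}$ for $l=i$ from the zero-extended bonus $h$ — sum over all $N-1$ coordinates, and absorb $p_N$ via the slaved-coordinate factor of $2$; the $\tfrac{1}{4N}$ scaling makes the sum telescope to $\tfrac{L}{2}$, comfortably inside the required constant. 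What your approach buys is generality: as you note, it never uses that $\pi,\pi'$ are single-arm, so it covers the general-incentive version of the construction for free, whereas the paper's argument is tied to the indices $l(\pi),l(\pi')$ being the only nonzero coordinates. One point worth making explicit if you write this up: the displayed case analysis in the paper's \pref{eq:smooth-hard-instance-1} literally contains $\tfrac{1}{16N(1-\pi_i)}$ in the branch conditioned on $\pi_l\le\tfrac12$; your reading that this should be $\tfrac{1}{16N(1-\pi_l)}$ (so that $p_l$ depends only on $\pi_l$) is the intended one — it is the only reading consistent with the expected-reward computation $\sum_{l\le N-1}(1-\pi_l)p_l(\pi)=\tfrac{N-1}{16N}$ — but you are silently correcting a typo, and your whole second paragraph rests on that correction, so it deserves a sentence.
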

    For all $\pi, \pi'$ where $\norm{\pi - \pi'}_\infty = \Delta$, let $l(\pi)$ and $l(\pi')$ be the indices were incentive is nonzero respectively (if $\pi = \mathbf{0}$, then $l(\pi) = N + 1$)
    \begin{align*}
        \sum_{l=1}^N \big| \Pr[a(\pi, (i, j))) = l] - \Pr[a(\pi', (i, j)) = l] \big| &= \big| \Pr[a(\pi, (i, j))) = l(\pi)] - \Pr[a(\pi', (i, j)) = l(\pi)] \big| \\ &\quad+ \big| \Pr[a(\pi, (i, j))) = l(\pi')] - \Pr[a(\pi', (i, j)) = l(\pi')] \big| \\ &\quad+ \big| \Pr[a(\pi, (i, j))) = N] - \Pr[a(\pi', (i, j)) = N] \big| \\ &= \Pr[a(\pi, (i, j))) = l(\pi)] - \frac{1}{16N} \\ &\quad+  \Pr[a(\pi', (i, j))) = l(\pi')] - \frac{1}{16N} \\ &\quad + \big| \Pr[a(\pi, (i, j))) = l(\pi)] - \Pr[a(\pi', (i, j)) = l(\pi')] \big| \\ &\le 4 \max_{x \le \frac{1}{2}} \big| \frac{d}{dx} \frac{1}{16N(1 - x)} \big| \Delta + 4 \mathbf{B}(\min(\epsilon, \Delta)) \tag{Mean value theorem}
        \\ &\le \Delta + (L - 1) \Delta = L \Delta
    \end{align*}

    Therefore, if an algorithm can't play the good interval incentive of each agent type consistently, the expected regret would be $\frac{1}{8}(L - 1)\epsilon T$, setting $\epsilon = (L - 1)^{-2/3} N^{1/3} T^{-1/3}$, the regret would be $\frac{1}{8} (L - 1)^{1/3} N^{1/3}T^{2/3}$.
    % Also, using any incentive outside of the good interval gives no information on which type of agent has come, which means similar to adversarial multi-armed bandits with $\frac{1}{2}\epsilon^{-1} N$ arms, we have
    % \begin{align*}
    %     R(T) \ge \frac{1}{8}(L - 1)\epsilon \Big(T - \frac{T}{\frac{1}{2}\epsilon N} - \frac{1}{8}(L - 1)\frac{1}{2}\epsilon^{-1} T^{3/2} \frac{1}{2}\epsilon^{-1/2} \Big) \ge \frac{1}{16} (L - 1)^{1/3} N^{1/3}T^{2/3}
    % \end{align*}
For $i \in [N], j \in \ceil{\epsilon}$, let $\calD_{i, j} = \{\pi \in [0, 1]^N| \pi_i \in \big[j \epsilon, (j + 1)\epsilon\big), \pi_{i'} = 0 \quad\forall i' \ne i\}$, then $\pi^*_{i, j} \in \calD_{i, j}$, for all $(i, j) \in J$. 
So for any algorithm $\textit{Alg}$ on the problem instance $(i, j) \in J$, we reduce it to an algorithm $\textit{Alg'}$ in stochastic MAB setting instance with $\mathcal{A} = \{(i', j') | i' \in [N], j' \in \ceil{\epsilon^{-1}} \}$ as the arm set. 
If algorithm $\textit{Alg}$ offers incentive $\pi \in \calD_{i', j'}$ to agent, and receives reward $r(\pi)$; then $\textit{Alg'}$ selects arm $(i', j')$ and receives reward $\max_{\pi \in \calD_{(i', j')} }  \E[r(\pi)] + \eta_t$, where $\eta_t \sim \mathcal{N}(0, 1)$. So the expected regret of algorithm $\text{Alg}$ is lower-bounded by the expected regret of $\text{Alg'}$. For instance $(i, j) \in J$, the expected reward of the reduced MAB problem is bounded by
\begin{align*}
    r'_{(i, j)}(i', j') = \begin{cases}
        \ge \frac{N - 1}{16N} + \frac{(L - 1)\epsilon}{16} & \text{if }(i', j') = (i, j) \\ \le\frac{N - 1}{16N} & \text{otherwise.} 
    \end{cases}
\end{align*}

Using instance-dependent lower bound for MAB with Gaussian noise (\cite{lattimore2020bandit}, Section 16.2), we have

\begin{align*}
            \inf_{\text{Alg}} \sup_{j \in J} R_T \ge \inf_{\text{Alg'}} \sup_{j \in J} R'_T = \Omega(\sum \Delta_{(i, j)}^{-1}) = \Omega((L - 1)^{1/3}N^{1/3}T^{1/3})
\end{align*}

\end{proof}

\subsection{Equivalence to greedy model with Gaussian noise}
\label{app:greedy-noise-smooth}
\begin{lemma}
    If for an agent of type $j$ arriving at time $t$, their preference vector is $\mu^j + \eta^t$, where $\mu^j$ is the expected preference vector if agents of type $j$, and $\eta^t \sim \mathcal{N}(\mathbf{0}, I)$ is independent noise of the agent, then for Greedy choice model \pref{assum:lipschitz} holds. 
\end{lemma}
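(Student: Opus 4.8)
The plan is to reduce the claim to a standard total-variation estimate between two Gaussians. Fix the agent type $j$ and write $\nu(\pi) := \mu^j + \pi \in \bbR^N$ for the effective deterministic preference vector under incentive $\pi$. Under the greedy rule with Gaussian-perturbed preferences the chosen arm is $a(\pi,j) = \argmax_{i \in [N]}\big(\nu(\pi)_i + \eta_i\big)$ with $\eta \sim \calN(\mathbf 0, I_N)$; since $\eta$ has a density, ties occur with probability $0$, so $a(\pi,j) = g\big(\nu(\pi) + \eta\big)$ for the fixed measurable map $g(x) = \argmax_i x_i$, defined on the full-measure set where the maximizer is unique. Hence the law of $a(\pi,j)$ is the pushforward $g_{\#}\calN\big(\nu(\pi), I_N\big)$, and the map $\pi \mapsto \nu(\pi)$ is a translation, so $\|\nu(\pi)-\nu(\pi')\|_p = \|\pi-\pi'\|_p$ for every $p$.

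First I would invoke the data-processing inequality for total variation: applying the fixed map $g$ cannot increase total variation, so
\[
\tfrac12 \sum_{i=1}^N \big|\Pr[a(\pi,j)=i] - \Pr[a(\pi',j)=i]\big| \le d_{\mathrm{TV}}\big(\calN(\nu(\pi), I_N),\, \calN(\nu(\pi'), I_N)\big).
\]
Then I would bound the Gaussian total variation: for isotropic Gaussians with common covariance $I_N$ and means $a,b$, a rotation reduces the computation to one dimension, giving $d_{\mathrm{TV}}\big(\calN(a,I_N), \calN(b,I_N)\big) = 2\Phi\big(\|a-b\|_2/2\big) - 1$, and the elementary estimate $2\Phi(t)-1 = \int_{-t}^{t}\phi(s)\,ds \le 2t\,\phi(0) = t\sqrt{2/\pi}$ gives $d_{\mathrm{TV}} \le \|a-b\|_2/\sqrt{2\pi}$. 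Combining with $\|\nu(\pi)-\nu(\pi')\|_2 = \|\pi-\pi'\|_2 \le \sqrt{N}\,\|\pi-\pi'\|_\infty$ yields
\[
\sum_{i=1}^N \big|\Pr[a(\pi,j)=i] - \Pr[a(\pi',j)=i]\big| \le \sqrt{\tfrac{2N}{\pi}}\,\|\pi - \pi'\|_\infty,
\]
so \pref{assum:lipschitz} holds with $L = \max\{1, \sqrt{2N/\pi}\}$, uniformly over all agent types $j$ and over any decision set $\calD \subseteq [0,1]^N$.

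I expect the only subtle point to be the first step: making sure the argmax is almost surely a well-defined deterministic function of $\nu(\pi)+\eta$, so that the choice law is genuinely a pushforward of a Gaussian and the data-processing inequality for $d_{\mathrm{TV}}$ is applicable; this just requires noting that the set of vectors with a non-unique coordinatewise maximum is Lebesgue-null and hence $\calN(\nu(\pi),I_N)$-null. Everything afterwards is a routine Gaussian computation. If one prefers to avoid the $\sqrt N$ factor incurred by passing from $\|\cdot\|_2$ to $\|\cdot\|_\infty$, one can instead bound $d_{\mathrm{TV}}$ by subadditivity of total variation over the product of the coordinate marginals, $d_{\mathrm{TV}}\big(\calN(a,I_N),\calN(b,I_N)\big) \le \sum_k d_{\mathrm{TV}}\big(\calN(a_k,1),\calN(b_k,1)\big)$, at the price of a worse constant $L = \Theta(N)$; either bound establishes the lemma.
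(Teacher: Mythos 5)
Your proof is correct, and it takes a genuinely different route from the paper's. The paper works directly with the explicit formula $\Pr[a(\pi,j)=i^*]=\int \phi(y-c_{i^*})\prod_{i\ne i^*}\Phi(y-c_i)\,dy$ (with $c_i=\mu^j_i+\pi_i$), observes that this is everywhere differentiable with a gradient bounded over $\pi\in[0,1]^N$, and invokes the mean value theorem to conclude Lipschitzness; it never extracts an explicit constant $L$. You instead view the chosen arm as the pushforward of $\calN(\mu^j+\pi, I_N)$ under the a.s.\ well-defined argmax map, apply the data-processing inequality for total variation, and then use the closed form $d_{\mathrm{TV}}\big(\calN(a,I_N),\calN(b,I_N)\big)=2\Phi(\|a-b\|_2/2)-1\le \|a-b\|_2/\sqrt{2\pi}$. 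Both establish the lemma (which only asserts existence of a finite $L$), but your argument buys three things: an explicit constant $L=\sqrt{2N/\pi}$, which moreover scales as $\sqrt N$ rather than the $\Theta(N)$ one would get by summing per-coordinate Lipschitz bounds on each $\Pr[a(\pi,j)=i]$ as the paper's differentiation route implicitly does; a clean treatment of the measurability/tie issue that the paper glosses over; and immediate generality, since the same data-processing step works for any noise law whose translates are TV-continuous (e.g., the Gumbel case mentioned in the introduction), whereas the paper's computation is tied to the Gaussian density. The paper's approach is more elementary and self-contained in that it avoids quoting the Gaussian total-variation formula, but is otherwise strictly less informative.
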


\begin{proof}    
\begin{align*}
    &\Pr[a(\pi_t, j_t) = i^*] = \Pr[i^* \in \arg\max_{i \in [N]} \left\{ \mu^{j_t}_i + \pi_{t,i} + \eta^t_i \right\} ]
    \\ &= \Pr[\eta^t_{i^*}+c_{i^*} \ge \max_{i \ne i^* \in [N]} \eta^t_i + c_i ] \tag{define $c_i := \mu_i^{j_t} + \pi_t,i$ } \\ &= \int \Pr[\eta^t_{i^*} + c_{i^*} = y]\Pr[\max_{i \ne i^* \in [N]} \eta^t_i + c_i  \le y] dy \\ &= \int \phi(y - c_{i^*})\Big(\prod_{i \ne i^*}^N \Phi(y - c_i) \Big)dy
\end{align*}
where $\phi$ and $\Phi$ are PDF and CDF of standard normal distribution respectively.
Since $\Pr[a(\pi_t, j_t) = i^*]$ is everywhere differentiable, and $\nabla_\pi \Pr[a(\pi_t, j_t) = i^*]$ is  bounded over $\pi \in [0, 1]^N$, using mean-value theorem, it's also Lipschitz.
\end{proof}

\end{document}